\documentclass[12pt, draftclsnofoot, onecolumn]{IEEEtran}

\usepackage{cite}
\usepackage[pdftex]{graphicx}
\graphicspath{{Figure/}}
\usepackage[cmex10]{amsmath}
\usepackage{amsthm}
\usepackage{amssymb}
\usepackage{algorithmic}
\usepackage{array}
\usepackage{color}
\usepackage{epstopdf}
\usepackage{amsfonts}

\usepackage{pgfplots}
\pgfplotsset{compat=1.3}
\usepackage{tikz}							
\usetikzlibrary{shapes}
\usetikzlibrary{spy}
\usetikzlibrary{circuits}
\usetikzlibrary{arrows}

\newcommand{\mat}[1]{\boldsymbol{\mathrm{#1}}}


\usepackage{mathtools}

\newtheorem{theorem}{Theorem}

\newtheorem{proposition}{Proposition}
\newtheorem{lemma}{Lemma}

\ifCLASSOPTIONcompsoc
  \usepackage[caption=false,font=normalsize,labelfont=sf,textfont=sf]{subfig}
\else
  \usepackage[caption=false,font=footnotesize]{subfig}
\fi
\usepackage{dblfloatfix}

\hyphenation{op-tical net-works semi-conduc-tor}

 \begin{document}

\title{{CSI-based versus RSS-based Secret-Key Generation under Correlated Eavesdropping}}

\author{Fran\c{c}ois Rottenberg,~\IEEEmembership{Member,~IEEE,}
	Trung-Hien~Nguyen,~\IEEEmembership{Member,~IEEE,}
	Jean-Michel Dricot,~\IEEEmembership{Member,~IEEE,}
	Fran\c{c}ois Horlin,~\IEEEmembership{Member,~IEEE,}\\
	and~J\'er\^ome Louveaux,~\IEEEmembership{Member,~IEEE.}
	\thanks{The research reported herein was partly funded by the Fonds national de la recherche scientifique (F.R.S.-FNRS). Part of the material in this paper has been submitted to IEEE PIMRC 2020 \cite{rottenberg2020impact}.} 
	\thanks{Fran\c{c}ois Rottenberg and J\'er\^ome Louveaux are with the Universit\'e catholique de Louvain, 1348 Louvain-la-Neuve, Belgium (e-mail: francois.rottenberg@uclouvain.be).}
	\thanks{Fran\c{c}ois Rottenberg, Trung-Hien Nguyen, Jean-Michel Dricot and Fran\c{c}ois Horlin are with the Universit\'e libre de Bruxelles, 1050 Brussel, Belgium.}
}

%



\maketitle

\begin{abstract}
	Physical-layer security (PLS) has the potential to strongly enhance the overall system security as an alternative to or in combination with conventional cryptographic primitives usually implemented at higher network layers. Secret-key generation relying on wireless channel reciprocity is an interesting solution as it can be efficiently implemented at the physical layer of emerging wireless communication networks, while providing information-theoretic security guarantees. In this paper, we investigate and compare the secret-key capacity based on the sampling of the entire complex channel state information (CSI) or only its envelope, the received signal strength (RSS). Moreover, as opposed to previous works, we take into account the fact that the eavesdropper's observations might be correlated and we consider the high signal-to-noise ratio (SNR) regime where we can find simple analytical expressions for the secret-key capacity. As already found in previous works, we find that RSS-based secret-key generation is heavily penalized as compared to CSI-based systems. At high SNR, we are able to precisely and simply quantify this penalty: a halved pre-log factor and a constant penalty of about 0.69 bit, which disappears as Eve's channel gets highly correlated.
\end{abstract}

\begin{IEEEkeywords}
Secret-Key Generation, RSS, CSI, Physical-Layer Security.
\end{IEEEkeywords}

%
\IEEEpeerreviewmaketitle

\section{Introduction}

\subsection{Problem Statement}

We consider in this paper the problem of generating secret keys between two legitimate users (Alice and Bob), subject to an illegitimate user (Eve) trying to recover the key. Maurer \cite{Maurer1993} and Ahlswede and Csisz\'{a}r \cite{Csiszar1993} were the first to analyze the problem of generating a secret key from correlated observations. In the source model (see Fig.~\ref{fig:source_model}), Alice, Bob and Eve observe the realizations of a discrete memoryless source. From their sequence of observations, Alice and Bob have to distill an identical key that remains secret from Eve. Moreover, Alice and Bob have access to a public error-free authenticated channel with unlimited capacity. 
This helps them to perform \textit{information reconciliation}, \textit{i.e.}, exchanging a few parity bits so as to agree on a common sequence of symbols. However, since the channel is public, Eve can gain information about the secret key from these parity bits, on top of her own channel observations that can also be correlated with Alice and Bob observations. This is why \textit{privacy amplification} is usually implemented after \textit{information reconciliation}, which consists in reducing the size of the key, through, \textit{e.g.}, universal hashing, so that Eve information about the key is completely eliminated. Upper and lower bounds for the secret-key capacity, defined as the number of secret bits that can be generated per observation of the source, were derived in \cite{Csiszar1993,Maurer1993}.

A practical source of common randomness at Alice and Bob consists of the wireless channel reciprocity, which implies that the propagation channel from Alice to Bob and from Bob to Alice is identical if both are measured within the same channel coherence time and at the same frequency. At successive coherence times, Alice and Bob can repeatedly sample the channel by sending each other a pilot symbol so as to obtain a set of highly correlated observations and finally start a key-distillation procedure. In this paper, we investigate the secret-key capacity relying on the entire complex channel state information (CSI) or only on the channel envelope, sometimes also referred to as received signal strength (RSS)\footnote{We focus the whole study in this paper on the envelope of the channel, not its power. However, the final results in terms of capacity are equivalent given the one-to-one relationship between envelope and power.}. We also consider the case where Eve's observations are correlated with the ones of Alice and Bob, which can occur in many practical situations, as explained in next section.

\subsection{State of the Art}

This study falls into the broad field of physical-layer security (PLS), which has attracted much interest in the recent decade as a competitive candidate to provide authentication, integrity and confidentiality in future communication networks \cite{Yang2015s5wc,Wu2018asopls,Hamamreh2019}. We refer to \cite{bloch2011physical} for an overview on the area. In the context of secret-key generation based on wireless reciprocity, there has been a large amount of related works, both from theoretical and experimental aspects \cite{Zeng2015plkg,Jorswieck2015,Zhang2016b}. Many works have considered using RSS as a source of randomness for secret-key generation, including multiple-antenna systems \cite{AzimiSadjadi2007,Jana2009,Zeng2010,Patwari2010,Liu2012,Guillaume2015,Zenger2016}. The choice of using RSS over full CSI has two main advantages: i) as opposed to CSI, RSS indicators are usually available at the higher layers of the communication layers, allowing for simple implementation of the key distillation procedure, relying on the legacy network infrastructure and ii) RSS is intrinsically more robust to phase offsets between Alice and Bob, relaxing constraints on the hardware, the synchronization and the reciprocity calibration.

The main disadvantage of RSS-based secret-key generation is that it does not use the full channel information and thus achieves a lower secret-key capacity than its CSI-based counterpart. This solution has also been studied in many different works. CSI-based secret-key capacity is generally easier to characterize analytically, which has been done in a large number of works \cite{Ye2006,Ye2010}, relying on multi-antenna systems \cite{wong2009secret,Wallace2010,Chen2011,Jorswieck2013,Quist2013}, ultrawideband channels \cite{Wilson2007}, and on the orthogonal frequency division multiplexing modulation \cite{Liu2013,Wu2013,Zhang2019,Melki2019}. The authors in \cite{Liu2012} analytically compare RSS and CSI approaches. The work of \cite{Zhang2016} also compares the two approaches relying on a thorough experimental study in various propagation environments, with different degrees of mobility.

The majority of works in the literature considers that Eve gets no side information about the key from her observations, which consist of the pilots transmitted by Alice and Bob \cite{AzimiSadjadi2007,wong2009secret,Ye2010,Chen2011,Jorswieck2013}. Often, this assumption is justified by the fact that: (i) Eve is supposed to be separated from Bob and Alice by more than one wavelength (otherwise she could be easily detected) and (ii) the channel environment is supposed to be rich enough in scattering implying that the fading process of the channels at the different antennas can be considered independent. The assumption of rapid decorrelation in space has been further validated through measurement campaigns \cite{AzimiSadjadi2007,Mathur2008,Madiseh2009,Ye2010,Zhang2016}. Moreover, this assumption simplifies the expression of the secret-key capacity, which simply becomes equal to the mutual information between Alice and Bob. However, it often occurs in practical scenarios that scatterers are clustered with small angular spread rather than being uniformly distributed, which leads to much longer spatial decorrelation length, as was shown in \cite{rottenberg2020impact} for practical 3GPP channel models. In \cite{Chou2010}, the authors studied the impact of channel sparsity, inducing correlated eavesdropping, on the secret-key capacity. In \cite{Wallace2009}, the impact of the number of paths and the eavesdropper separation is analytically studied. In \cite{Zhang2017}, spatial and time correlation of the channel is taken into account using a Jakes Doppler model. In \cite{Pierrot2013,pierrot2013journal}, experiments are conducted indoor to evaluate the correlation of the eavesdropper's observations and its impact on the secret-key capacity. A similar study is conducted for a MIMO indoor measurement campaign in \cite{Wallace2010}. The work of \cite{Zenger2016} also uses an indoor experimental approach and proposes results of cross-correlation, mutual information and secret-key rates, which depend on the eavesdropper's position. 


\subsection{Contributions}

Our main contribution is to propose a novel analytical comparison of the secret-key capacity based on RSS and CSI. As opposed to similar previous works such as \cite{Liu2012}, we do not assume that Eve's observations are uncorrelated. This more general case adds to the complexity of the study while remaining of practical importance. Moreover, the authors in \cite{Liu2012} could characterize the secret-key capacity for envelope sampling with a simple analytical expression. However, their simplification relied on the approximation of a sum of envelope components as Gaussian, which is not applicable for our channel model. Furthermore, other works have already compared RSS and CSI-based approaches taking into account correlated eavesdropping, such as \cite{Zhang2016}. However, the studies were mostly conducted experimentally and not analytically.

More specifically, our contributions can be summarized as follows: 1) We evaluate lower and upper bounds on the secret-key capacity for both the complex (full CSI) and the envelope (RSS) cases. In the complex case, we obtain simple closed-form expressions, while, in the envelope case, the bounds must be evaluated numerically. Some of the expressions in the complex case were already obtained in previous works. We chose to present them again in this work to provide a systematic framework and useful comparison benchmarks for the envelope case.  2) We show that, in a number of particular cases, the lower and upper bounds become tight: low correlation of the eavesdropper, relatively smaller noise variance at Bob than Alice (and vice versa) and specific high signal-to-noise ratio (SNR) regimes. 3) We show that, as soon as Alice (or Bob since everything is symmetrical) samples the envelope of her channel estimate, the other parties do not loose information by taking the envelopes of their own channel estimates. 4) We show that, in the high SNR regime, the bounds can be evaluated in closed-form and result in simple expressions. The penalty of envelope-based versus complex-based secret-key generation is: i) a pre-log factor of $1/2$ instead of $1$, implying a slower slope of the secret-key capacity as a function of SNR and ii) a constant penalty of $0.69$ bit, which disappears as Eve's channel gets highly correlated.

The rest of this paper is structured as follows. Section~\ref{section:transmission_model} describes the transmission model used in this work. Sections~\ref{section:secret_key_capacity_complex} and \ref{section:secret_key_capacity_envelope} study the secret-key capacity based on complex and envelope sampling, respectively. Section~\ref{section:Numerical_validation} numerically validates the theoretical results. Finally, Section~\ref{section:conclusion} concludes the paper.


\subsection*{Notations}
Matrices are denoted by bold uppercase letters. Non bold upper case letter refers to a random variable. Superscript $^*$ stands for conjugate operator. The symbol $\Re(.)$ denotes the real part. $\jmath$ is the imaginary unit. $|\mat{A}|$ is the determinant of matrix $\mat{A}$. The letters $e$ and $\gamma$ refer to the Euler number and the Euler-Mascheroni constant respectively. $h(.)$ and $I(.;.)$ refer to the differential entropy and the mutual information respectively. We use the notation $f(x)=O(g(x))$, as $x\rightarrow a$, if there exist positive numbers $\delta$ and $\lambda$ such that $|f(x)|\leq \lambda g(x)$ when $0<|x-a|<\delta$.



\section{Transmission Model}
\label{section:transmission_model}

\begin{figure}[!t]  
	\centering

\resizebox{0.45\textwidth}{!}{%
	{\includegraphics[clip, trim=0cm 14cm 23cm 0cm, scale=1]{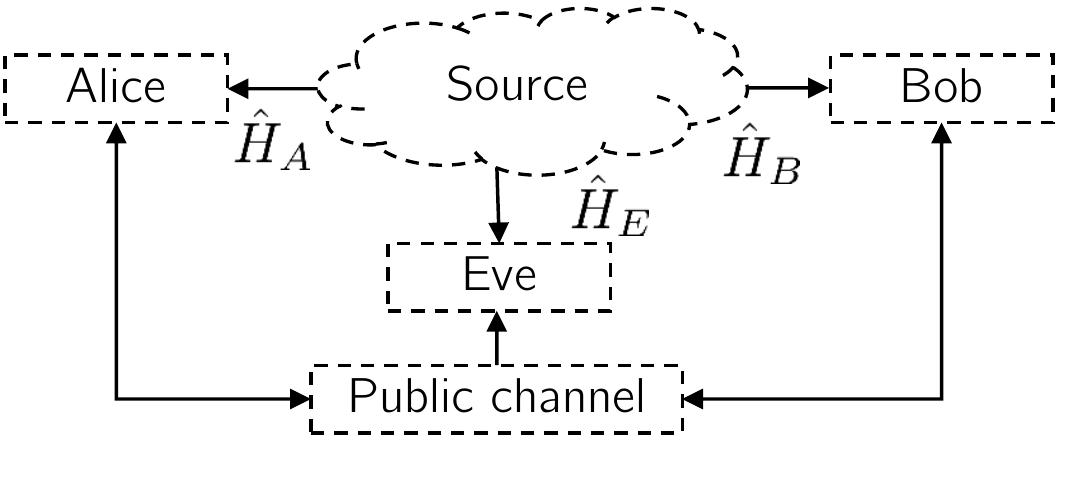}} 
}
	\vspace{-2em}
	\caption{Source model for secret-key agreement.}
	\label{fig:source_model}
	\vspace{-1em}
\end{figure}

Alice and Bob extract a common key from observations of their shared channel $H$, assumed to be reciprocal. The channel $H$ is repeatedly sampled in time based on the transmission of \textit{a priori} known pilots by Alice and Bob. We assume that the successive observations of $H$ are distant enough in time so that they can be considered independent. During these successive observations, the environment remains stationary so that they can be considered as identically distributed. Considering a narrowband channel, the estimates of $H$ at Alice's and Bob's sides, respectively denoted by $\hat{H}_A$ and $\hat{H}_B$, are given by
\begin{align*}
\hat{H}_A&=H+W_A,\ \hat{H}_B=H+W_B,
\end{align*}
where the additive noise samples $W_A$ and $W_B$ are modeled as independent zero mean circularly-symmetric complex Gaussian (ZMCSCG) random variables with variances $\sigma^2_A$ and $\sigma^2_B$ respectively.  

The strategy of Eve consists in going as close as possible from Bob's antenna to try to maximize the correlation of its channel\footnote{Note that all of the following derivations are symmetrical if Eve gets close to Alice instead of Bob.}. Then, Eve estimates her channel $H_E$ between Alice's antenna and hers by intercepting the pilots sent from Alice to Bob. Since Eve is close to Bob, the channel from Alice to Eve will be spatially correlated with $H$ while the channel between Bob and Eve will experience a negligible correlation with $H$. Therefore, we neglect the pilot sent by Bob and received by Eve in the following as she cannot get any useful information from it \cite{Wallace2009}. The channel estimate of Eve is given by
\begin{align*}
\hat{H}_E=H_E+W_E,
\end{align*}
where $W_E$ is modeled as ZMCSCG with variance $\sigma^2_E$. 
If Alice and Bob transmit a pilot of equal power and Alice, Bob and Eve use a similar receiver, one could expect a situation of equal noise variance $\sigma^2_A=\sigma^2_B=\sigma^2_E$. On the other hand, Eve could use a more powerful receiver than Alice and/or Bob by having, \textit{e.g.}, a larger antenna size, a multi-antenna receiver or an amplifier with lower noise figure. This would result in a lower noise variance $\sigma^2_E$. Moreover, a different pilot power transmitted by Alice and Bob will induce variations in their noise variances $\sigma^2_A$ and $\sigma^2_B$. Indeed, in practice, the channel estimates $\hat{H}_A$, $\hat{H}_B$ and $\hat{H}_E$ are obtained by dividing the received signal, which includes the additive noise, by an \textit{a priori} known pilot. For instance, if the pilot transmitted by Bob has a stronger power, the noise power at Alice $\sigma_A^2$ will be relatively weaker.

This scenario corresponds to the memoryless source model for secret-key agreement \cite{Csiszar1993,bloch2011physical} represented in Fig.~\ref{fig:source_model}: Alice, Bob and Eve observe a set of independent and identically distributed (i.i.d.) repetitions of the random variables $\hat{H}_A$, $\hat{H}_B$ and $\hat{H}_E$. 
Moreover, an error-free authenticated public channel of unlimited capacity is available for communication. All parties have access to the public channel.

In the following section, we will study the secret-key capacity of this model. To do this, we need to know the probability distributions of the random variables $\hat{H}_A$, $\hat{H}_B$ and $\hat{H}_E$, which directly depend on the probability distributions of $W_A$, $W_B$, $W_E$, $H$ and $H_E$. The distributions of $W_A$, $W_B$ and $W_E$ were already detailed. Moreover, measurement campaigns have shown that the channels $H$ and $H_E$ can be accurately modeled with a ZMCSCG distribution, especially in non-line-of-sight situations and rich scattering environments \cite{durgin2003space}. Therefore, we assume that $(H, H_E)$ follows a ZMCSCG with covariance matrix given by
\begin{align*}
\mat{C}_{HH_E}=p\begin{pmatrix}
1& \rho\\
\rho^*& 1
\end{pmatrix},
\end{align*}
where $p$ is the channel variance, such that $0<p<\infty$ and $\rho$ is the spatial correlation coefficient, such that $0\leq |\rho| \leq 1$. We assume that $H$ and $H_E$ have the same variance $p$, which makes sense in practice if Bob and Eve are close enough so as to belong to the same local area \cite{durgin2003space}. In the following, we use the fact the differential entropy of a circularly symmetric Gaussian with covariance $\mat{C}$ is given by $\log_2( |\pi e\mat{C}|)$, where $e$ is the Euler number.

In the sequel, at different places, we will consider the high SNR regime. When this regime is considered, we will always assume, implicitly or explicitly, that, as $\sigma_A^2\rightarrow 0$, $\sigma_B^2\rightarrow 0$ and $\sigma_E^2\rightarrow 0$,

$\mathbf{(As1)}$: the ratio $\frac{\sigma_A^2}{\sigma_B^2}$ remains fixed and $0<\frac{\sigma_A^2}{\sigma_B^2}<\infty$,

$\mathbf{(As2)}$: the ratio $\frac{\sigma_A^2}{\sigma_E^2}$ remains fixed and $0<\frac{\sigma_A^2}{\sigma_E^2}<\infty$,

$\mathbf{(As3)}$: the ratio $\frac{\sigma_B^2}{\sigma_E^2}$ remains fixed and $0<\frac{\sigma_B^2}{\sigma_E^2}<\infty$.

%
%




\section{Secret-Key Capacity based on Complex Channel Sampling}
\label{section:secret_key_capacity_complex}

In this section, we analyze the secret-key capacity associated with complex channel sampling, that we denote by $C_{s}^{\mathrm{Cplex}}$. Some of the results were already derived in previous works. Moreover, most of them result from a direct evaluation of standard formulas for the differential entropy of Gaussian random variables. We still present them as they provide accurate benchmarks as a comparison with the envelope case presented in Section~\ref{section:secret_key_capacity_envelope}.

The secret-key capacity is defined as the maximal rate at which Alice and Bob can agree on a secret-key while keeping the rate at which Eve obtains information about the key arbitrarily small for a sufficiently large number of observations. Moreover, Alice and Bob should agree on a common key with high probability and the key should approach the uniform distribution. We refer to \cite{Csiszar1993,Maurer1993,bloch2011physical} for a formal definition. As explained in Section~\ref{section:transmission_model}, we consider that Eve gets useful information from her observation $\hat{H}_E$ over $H$. This implies that the secret-key capacity is not simply equal to $I(\hat{H}_A;\hat{H}_B)$, as was considered in many previous works \cite{AzimiSadjadi2007,Chen2011,Jorswieck2013,Ye2006,Ye2010}. Finding the general expression of the secret-key capacity for a given probability distribution of $\hat{H}_A,\hat{H}_B,\hat{H}_E$ is still an open problem. From \cite{Csiszar1993,Maurer1993} \cite[Prop. 5.4]{bloch2011physical}, the secret-key capacity, expressed in the number of generated secret bits per channel observation, can be lower and upper bounded as follows
\begin{align}
C_{s}^{\mathrm{Cplex}}&\geq  I(\hat{H}_A;\hat{H}_B)-\min\left[I(\hat{H}_A;\hat{H}_E), I(\hat{H}_B;\hat{H}_E) \right] \label{eq:lower_bounds}\\
C_{s}^{\mathrm{Cplex}}&\leq \min\left[ I(\hat{H}_A;\hat{H}_B),I(\hat{H}_A;\hat{H}_B|\hat{H}_E) \right].  \label{eq:upper_bounds}
\end{align}
The lower bound (\ref{eq:lower_bounds}) implies that, if Eve has less information about $\hat{H}_B$ than Alice or respectively about $\hat{H}_A$ than Bob, such a difference can be leveraged for secrecy \cite{Maurer1993}. Moreover, this rate can be achieved with one-way communication. On the other hand, the upper bound (\ref{eq:upper_bounds}) implies that the secret-key rate cannot exceed the mutual information between Alice and Bob. Moreover, the secret-key rate cannot be higher than the mutual information between Alice and Bob if they happened to learn Eve's observation $\hat{H}_E$. In particular cases, the lower and upper bounds can become tight. In our context, three particular cases can be distinguished:

\begin{enumerate}
	\item $\rho=0$: Eve does not learn anything about $H$ from $\hat{H}_E$, which becomes independent from $\hat{H}_A$ and $\hat{H}_B$. This leads to the trivial result $C_{s}^{\mathrm{Cplex}}=I(\hat{H}_A;\hat{H}_B)$.
	
	\item $\sigma^2_B=0$: this implies that $\hat{H}_A\rightarrow \hat{H}_B \rightarrow \hat{H}_E$ forms a Markov chain, which leads to
		$C_{s}^{\mathrm{Cplex}}=I(\hat{H}_A;\hat{H}_B|\hat{H}_E)=I(\hat{H}_A;\hat{H}_B)-I(\hat{H}_A;\hat{H}_E)$ \cite[Corol. 4.1]{bloch2011physical}.
	
	\item $\sigma^2_A=0$: symmetrically as in 2),
		$C_{s}^{\mathrm{Cplex}}=I(\hat{H}_A;\hat{H}_B|\hat{H}_E)=I(\hat{H}_A;\hat{H}_B)-I(\hat{H}_B;\hat{H}_E)$.
\end{enumerate}
Cases 2) and 3) can be verified in practice as the receiver of Alice or Bob is significantly noisier than the one of the other: as $\sigma_A^2\rightarrow 0$ for a fixed value of $\sigma_B^2$ or as $\sigma_B^2\rightarrow 0$ for a fixed value of $\sigma_A^2$. This would happen for instance if the power of Alice pilot is much stronger than the one of Bob.

In the next subsections, we evaluate the different expressions of the mutual information required to compute the lower and upper bounds of (\ref{eq:lower_bounds}) and (\ref{eq:upper_bounds}): i) the mutual information between Alice and Bob $I(\hat{H}_A;\hat{H}_B)$; ii) the mutual information between Alice and Eve $I(\hat{H}_A;\hat{H}_E)$, and similarly for Bob $I(\hat{H}_B;\hat{H}_E)$; and iii) the conditional mutual information between Alice and Bob given Eve's observations $I(\hat{H}_A;\hat{H}_B|\hat{H}_E)$.


\begin{figure*}[!t]  
	{\small \begin{align}
	C_{s}^{\mathrm{Cplex}}&\geq \log_2\left(1+\frac{p}{\sigma^2_A+\sigma^2_B+\frac{\sigma^2_A\sigma^2_B}{p}}\right)-\log_2\left(1+\frac{p|\rho|^2}{p(1-|\rho|^2)+\max(\sigma^2_A,\sigma^2_B)+\sigma^2_E+\frac{\max(\sigma^2_A,\sigma^2_B)\sigma^2_E}{p}}\right)\label{eq:LB_complex}.	\\
	C_{s}^{\mathrm{Cplex}}&\leq \log_2\left(\frac{\left[(p+\sigma_A^2)(p+\sigma_E^2)-|\rho p|^2\right]\left[(p+\sigma_B^2)(p+\sigma_E^2)-|\rho p|^2\right]}{(p+\sigma^2_E)\left[(p(\sigma^2_A+\sigma^2_B)+\sigma^2_A\sigma^2_B)(p+\sigma_E^2)-|\rho p|^2(\sigma_A^2+\sigma_B^2)\right]}\right). \label{eq:UB_complex}
	\end{align}	}	\vspace{-2em}
\end{figure*}

\subsection{Mutual Information between Alice and Bob}

Using previously introduced transmission and channel models, we can find that the random variables $\hat{H}_A$ and $\hat{H}_B$ are jointly Gaussian distributed with covariance
\begin{align*}
\mat{C}_{\hat{H}_A\hat{H}_B}=\begin{pmatrix}
p+\sigma^2_A& p\\
p& p+\sigma^2_B
\end{pmatrix}.
\end{align*}
From this distribution, we find back the result of \cite{Ye2006}
\begin{align}
I(\hat{H}_A;\hat{H}_B)&=h(\hat{H}_A)+h(\hat{H}_B)-h(\hat{H}_A,\hat{H}_B)\label{eq:I_XY}\\
	&=\log_2\left( \frac{(p+\sigma^2_A)(p+\sigma^2_B)}{|\mat{C}_{\hat{H}_A\hat{H}_B}|}\right)\nonumber\\
&=\log_2\left(1+\frac{p}{\sigma^2_A+\sigma^2_B+\frac{\sigma^2_A\sigma^2_B}{p}}\right)
.\nonumber
\end{align}
This rate corresponds to the secret-key capacity in case of uncorrelated observations at Eve ($\rho=0$). At high SNR, as $\sigma_A^2\rightarrow 0$ and $\sigma_B^2\rightarrow 0$, the expressions becomes 
\begin{align}
I(\hat{H}_A;\hat{H}_B)= \log_2\left(\frac{p}{\sigma^2_A+\sigma^2_B}\right)+O\left(\sigma_A^2\right), \label{eq:I_HA_HB_complex_high_SNR}
\end{align}
which is characterized by a \textit{pre-log factor} of one.

\subsection{Mutual Information between Alice/Bob and Eve}

We can observe that $\hat{H}_A$ and $\hat{H}_E$ are jointly Gaussian distributed with covariance
\begin{align*}
\mat{C}_{\hat{H}_A\hat{H}_E}=\begin{pmatrix}
p+\sigma^2_A& \rho p\\
\rho^*p& p+\sigma^2_E
\end{pmatrix}.
\end{align*}
This leads to the mutual information
\begin{align*}
I(\hat{H}_A;\hat{H}_E)
&=\log_2\left( \frac{(p+\sigma^2_A)(p+\sigma^2_E)}{|\mat{C}_{\hat{H}_A\hat{H}_E}|}\right)\nonumber
=\log_2\left(1+\frac{p|\rho|^2}{p(1-|\rho|^2)+\sigma^2_A+\sigma^2_E+\frac{\sigma^2_A\sigma^2_E}{p}}\right).
\end{align*}
The mutual information $I(\hat{H}_B;\hat{H}_E)$ can be similarly obtained, simply replacing subscript $A$ by $B$. Using the previously derived expressions of $I(\hat{H}_A;\hat{H}_B)$, $I(\hat{H}_A;\hat{H}_E)$ and $I(\hat{H}_B;\hat{H}_E)$, we find that the lower bound in (\ref{eq:lower_bounds}) evaluates to (\ref{eq:LB_complex}). Note that the lower bound is not restricted to be positive (as will also be shown numerically in Section~\ref{section:Numerical_validation}), in which case it becomes useless since, by definition, $C_{s}^{\mathrm{Cplex}}\geq 0$. Nonetheless, it does not necessarily imply that $C_{s}^{\mathrm{Cplex}}=0$. We can find the condition on the minimum noise variance at Eve $\sigma^2_E$ for having a larger-than-zero lower bound
\begin{align}
\sigma^2_E&> p(|\rho |^2-1) +|\rho |^2\min(\sigma^2_A,\sigma^2_B). \label{eq:lower_bound_N_Z}
\end{align}
In the worst-case, $|\rho|=1$ and $\sigma^2_E$ has to be larger than the minimum of the noise variances of Alice and Bob. We can invert (\ref{eq:lower_bound_N_Z}) to find the maximal correlation coefficient $|\rho|^2$ to have a larger-than-zero lower bound
\begin{align*}
|\rho |^2&< \frac{p+\sigma^2_E}{p+\min(\sigma^2_A,\sigma^2_B)}.
\end{align*}
In the high SNR regime, as $\sigma_A^2\rightarrow 0$, $\sigma_B^2\rightarrow 0$ and $\sigma_E^2\rightarrow 0$, equation (\ref{eq:LB_complex}) becomes
\begin{align}
&C_{s}^{\mathrm{Cplex}}\geq \log_2\left(\frac{p}{\sigma^2_A+\sigma^2_B}\right)-\log_2\left(\frac{p}{p(1-|\rho|^2)+\max(\sigma_A^2,\sigma_B^2)+\sigma_E^2}\right)+O\left(\sigma_A^2\right)\label{eq:asymptotic_LB_complex}
.
\end{align}
As soon as $|\rho|<1$, $C_{s}^{\mathrm{Cplex}}$ is unbounded and goes to infinity at high SNR. Indeed, 
\begin{align*}
	&\lim_{\sigma_A^2,\sigma_B^2\rightarrow 0} I(\hat{H}_A;\hat{H}_B)= +\infty, \ \lim_{\sigma_A^2,\sigma_E^2\rightarrow 0} I(\hat{H}_A;\hat{H}_E)=\lim_{\sigma_B^2,\sigma_E^2\rightarrow 0} I(\hat{H}_B;\hat{H}_E)=\log_2\left(\frac{1}{1-|\rho|^2}\right), 
\end{align*}
which is bounded for $|\rho|<1$. 

\subsection{Conditional Mutual Information between Alice and Bob} \label{subsection:conditional_I_complex}
We can note that $\hat{H}_A$, $\hat{H}_B$ and $\hat{H}_E$ are jointly Gaussian distributed with covariance matrix
\begin{align*}
\mat{C}_{\hat{H}_A\hat{H}_B\hat{H}_E}=\begin{pmatrix}
p+\sigma^2_A& p & \rho p\\
p& p+\sigma^2_B & \rho p\\
\rho^* p& \rho^* p& p+\sigma^2_E
\end{pmatrix},
\end{align*}
which gives
\begin{align}
I(\hat{H}_A;\hat{H}_B|\hat{H}_E)
&=h(\hat{H}_A,\hat{H}_E)-h(\hat{H}_E)+h(\hat{H}_B,\hat{H}_E)-h(\hat{H}_A,\hat{H}_B,\hat{H}_E)\label{eq:def_joint_diff_entropy3}\\
&=\log_2\left(\frac{|\mat{C}_{\hat{H}_A\hat{H}_E}||\mat{C}_{\hat{H}_B\hat{H}_E}|}{(p+\sigma^2_E)|\mat{C}_{\hat{H}_A\hat{H}_B\hat{H}_E}|}\right).\nonumber
\end{align}

The upper bound in (\ref{eq:upper_bounds}) is then given by the minimum of $I(\hat{H}_A;\hat{H}_B|\hat{H}_E)$ and $I(\hat{H}_A;\hat{H}_B)$. In Appendix~\ref{subsection:proof1}, we prove that the condition $I(\hat{H}_A;\hat{H}_B|\hat{H}_E)\leq I(\hat{H}_A;\hat{H}_B)$ is always verified under the jointly Gaussian channel model considered in this work. The upper bound is thus given by (\ref{eq:UB_complex}).

Based on the analytical expressions of the upper and lower bounds, we can find a novel condition for tightness of the bounds at high SNR.
\begin{proposition} \label{proposition:tight_high_SNR}
	Under $\mathbf{(As1)}-\mathbf{(As3)}$, as $\sigma_A^2\rightarrow 0$, $\sigma_B^2\rightarrow 0$ and $\sigma_E^2\rightarrow 0$, if $|\rho|<1$, the upper and low bounds of (\ref{eq:LB_complex}) and (\ref{eq:UB_complex}) become tight and the secret-key capacity is given by
	\begin{align} 
		C_{s}^{\mathrm{Cplex}}= \log_2\left(\frac{p(1-|\rho|^2)}{\sigma_A^2+\sigma_B^2}\right) +O\left(\sigma_A^2\right). \label{eq:prop_tight_high_SNR}
	\end{align}
\end{proposition}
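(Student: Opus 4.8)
The plan is to show that both the lower bound \eqref{eq:LB_complex} and the upper bound \eqref{eq:UB_complex} converge to the same expression \eqref{eq:prop_tight_high_SNR} in the high-SNR limit, so that the secret-key capacity, being sandwiched between them, must equal this common value up to the claimed $O(\sigma_A^2)$ error. For the lower bound, I would start from the asymptotic form already derived in \eqref{eq:asymptotic_LB_complex}; combining the two logarithms there gives $\log_2\!\left(\frac{p(1-|\rho|^2)+\max(\sigma_A^2,\sigma_B^2)+\sigma_E^2}{\sigma_A^2+\sigma_B^2}\right)$, and since $\max(\sigma_A^2,\sigma_B^2)+\sigma_E^2 = O(\sigma_A^2)$ by $\mathbf{(As1)}-\mathbf{(As3)}$ while $p(1-|\rho|^2)$ is a fixed strictly positive constant (using $|\rho|<1$ here), a first-order expansion of the logarithm collapses this to $\log_2\!\left(\frac{p(1-|\rho|^2)}{\sigma_A^2+\sigma_B^2}\right) + O(\sigma_A^2)$.

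For the upper bound, the main work is the asymptotic analysis of \eqref{eq:UB_complex}. I would treat the argument of $\log_2$ as a ratio of polynomials in the small quantities $\sigma_A^2,\sigma_B^2,\sigma_E^2$. In the numerator, each bracket $(p+\sigma_A^2)(p+\sigma_E^2)-|\rho p|^2 = p^2(1-|\rho|^2) + O(\sigma_A^2)$, so the numerator is $p^4(1-|\rho|^2)^2 + O(\sigma_A^2)$. In the denominator, $(p+\sigma_E^2) = p + O(\sigma_A^2)$, and the bracket $\left[(p(\sigma_A^2+\sigma_B^2)+\sigma_A^2\sigma_B^2)(p+\sigma_E^2) - |\rho p|^2(\sigma_A^2+\sigma_B^2)\right]$ factors, to leading order, as $(\sigma_A^2+\sigma_B^2)\bigl(p^2 - |\rho p|^2\bigr) + O((\sigma_A^2)^2) = p^2(1-|\rho|^2)(\sigma_A^2+\sigma_B^2)(1+O(\sigma_A^2))$. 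Hence the whole ratio is $\frac{p^4(1-|\rho|^2)^2}{p\cdot p^2(1-|\rho|^2)(\sigma_A^2+\sigma_B^2)}(1+O(\sigma_A^2)) = \frac{p(1-|\rho|^2)}{\sigma_A^2+\sigma_B^2}(1+O(\sigma_A^2))$, and taking $\log_2$ yields $\log_2\!\left(\frac{p(1-|\rho|^2)}{\sigma_A^2+\sigma_B^2}\right) + O(\sigma_A^2)$, matching the lower bound.

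Finally, I would note that since \eqref{eq:LB_complex} $\le C_{s}^{\mathrm{Cplex}} \le$ \eqref{eq:UB_complex} holds exactly (the upper bound being justified by the comparison $I(\hat{H}_A;\hat{H}_B|\hat{H}_E)\le I(\hat{H}_A;\hat{H}_B)$ established in Appendix~\ref{subsection:proof1}), and both outer bounds equal $\log_2\!\left(\frac{p(1-|\rho|^2)}{\sigma_A^2+\sigma_B^2}\right)$ up to an $O(\sigma_A^2)$ term, the squeeze gives \eqref{eq:prop_tight_high_SNR}. One technical care point throughout: the $O(\sigma_A^2)$ notation is used to absorb $\sigma_B^2$ and $\sigma_E^2$ as well, which is legitimate precisely because assumptions $\mathbf{(As1)}-\mathbf{(As3)}$ force $\sigma_B^2$ and $\sigma_E^2$ to be $\Theta(\sigma_A^2)$; I would make this substitution explicit once and then use it freely.

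The main obstacle I anticipate is the bookkeeping in the denominator of the upper bound: it is not immediately obvious that the bracketed term factors cleanly as $(\sigma_A^2+\sigma_B^2)$ times a constant plus higher-order corrections, and one must be careful that the $|\rho p|^2(\sigma_A^2+\sigma_B^2)$ term combines with the leading part of $p(\sigma_A^2+\sigma_B^2)(p+\sigma_E^2)$ rather than being subdominant — it is exactly the same order, which is what produces the crucial $(1-|\rho|^2)$ factor. Getting this cancellation right, and confirming that the leftover terms are genuinely $O((\sigma_A^2)^2)$ relative to the leading $\Theta(\sigma_A^2)$ denominator (so they contribute only $O(\sigma_A^2)$ inside the log), is the delicate part; everything else is routine Taylor expansion of $\log_2(1+x)$.
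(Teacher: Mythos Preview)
Your proposal is correct and follows exactly the approach the paper takes: the paper's proof consists of a single sentence stating that one takes the limits in \eqref{eq:LB_complex} and \eqref{eq:UB_complex} and observes that both converge to \eqref{eq:prop_tight_high_SNR} when $|\rho|<1$. Your write-up simply makes explicit the asymptotic bookkeeping (especially the factorization of the upper-bound denominator that produces the $(1-|\rho|^2)$ factor) that the paper leaves to the reader.
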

\begin{proof}
	The proof is easily obtained by taking the limits in (\ref{eq:LB_complex}) and (\ref{eq:UB_complex}) and seeing that they both converge towards (\ref{eq:prop_tight_high_SNR}), provided that $|\rho|<1$.
\end{proof}

\section{Secret-Key Capacity based on Channel Envelope Sampling}
\label{section:secret_key_capacity_envelope}

The goal of this section is to evaluate the impact on the secret-key capacity if Alice and Bob rely on the envelopes of their observations rather than the complex values to generate a secret key. We denote by $C_s^{\mathrm{Evlpe}}$ the secret-key capacity based on envelope sampling. We also introduce the notations 
\begin{align*}
	\hat{H}_A=\hat{R}_A e^{\jmath\hat{\Phi}_A},\ \hat{H}_B=\hat{R}_B e^{\jmath\hat{\Phi}_B},\ \hat{H}_E=\hat{R}_E e^{\jmath\hat{\Phi}_E},
\end{align*}
where $\hat{R}_A$, $\hat{R}_B$ and $\hat{R}_E$ are the random modules of $\hat{H}_A$, $\hat{H}_B$ and $\hat{H}_E$ respectively. Similarly, $\hat{\Phi}_A$, $\hat{\Phi}_B$ and $\hat{\Phi}_E$ are their random phases. Note that $\hat{H}_A$ is equivalently represented by $\hat{R}_A$ and $\hat{\Phi}_A$ or $\Re(\hat{H}_A)$ and $\Im(\hat{H}_A)$. We start by stating an insightful result from \cite[Th.~2]{Liu2012}, that we generalize for Eve's observations.

\begin{proposition} \label{theorem:liu} The mutual information $I(\hat{H}_A;\hat{H}_B)$ satisfies
	\begin{align*}
	I(\hat{H}_A;\hat{H}_B)&=I(\Re(\hat{H}_A);\Re(\hat{H}_B))+I(\Im(\hat{H}_A);\Im(\hat{H}_B))\\
	&\geq 	I(\hat{R}_A;\hat{R}_B)+I(\hat{\Phi}_A;\hat{\Phi}_B).
	\end{align*}
	\begin{proof} 
		The proof is obtained as a particular case of Proposition~\ref{theorem:liu_Eve} for $\rho=1$ and replacing subscripts $E$ by $B$.		
	\end{proof}
\end{proposition}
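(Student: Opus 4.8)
The plan is to prove the equality and the inequality separately. For the equality $I(\hat H_A;\hat H_B)=I(\Re(\hat H_A);\Re(\hat H_B))+I(\Im(\hat H_A);\Im(\hat H_B))$, I would exploit that the real and imaginary parts decouple. Writing $H=H_R+\jmath H_I$ and $W_X=W_{X,R}+\jmath W_{X,I}$ for $X\in\{A,B\}$, with $H_R,H_I$ i.i.d. $\mathcal N(0,p/2)$, $W_{X,R},W_{X,I}$ i.i.d. $\mathcal N(0,\sigma_X^2/2)$, and all these real Gaussians mutually independent, one sees that $(\Re(\hat H_A),\Re(\hat H_B))=(H_R+W_{A,R},H_R+W_{B,R})$ is a function of the triple $\{H_R,W_{A,R},W_{B,R}\}$ only, while $(\Im(\hat H_A),\Im(\hat H_B))$ is a function of the independent triple $\{H_I,W_{A,I},W_{B,I}\}$ only; hence the real pair is independent of the imaginary pair. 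Differential entropy then adds over the two blocks, both for the joint law and for each marginal, and substituting $h(\hat H_X)=h(\Re\hat H_X)+h(\Im\hat H_X)$ and $h(\hat H_A,\hat H_B)=h(\Re\hat H_A,\Re\hat H_B)+h(\Im\hat H_A,\Im\hat H_B)$ into $I(\hat H_A;\hat H_B)=h(\hat H_A)+h(\hat H_B)-h(\hat H_A,\hat H_B)$ yields the identity.

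For the inequality I would use that $\hat H_X$ is equivalently represented by the pair $(\hat R_X,\hat\Phi_X)$, so $I(\hat H_A;\hat H_B)=I(\hat R_A,\hat\Phi_A;\hat R_B,\hat\Phi_B)$, and expand by the chain rule with the modulus taken first:
\begin{align*}
I(\hat R_A,\hat\Phi_A;\hat R_B,\hat\Phi_B)
&=I(\hat R_A;\hat R_B,\hat\Phi_B)+I(\hat\Phi_A;\hat R_B,\hat\Phi_B\mid\hat R_A)\\
&\geq I(\hat R_A;\hat R_B)+I(\hat\Phi_A;\hat\Phi_B\mid\hat R_A),
\end{align*}
where the inequality drops $\hat\Phi_B$ from the first term and $\hat R_B$ from the second, each step legitimate by non-negativity of conditional mutual information. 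It then remains to show $I(\hat\Phi_A;\hat\Phi_B\mid\hat R_A)\geq I(\hat\Phi_A;\hat\Phi_B)$. For this I would invoke the elementary identity $I(X;Y\mid Z)-I(X;Y)=I(X;Z\mid Y)-I(X;Z)$ with $X=\hat\Phi_A$, $Y=\hat\Phi_B$, $Z=\hat R_A$: since $\hat H_A$ is circularly symmetric complex Gaussian, its modulus $\hat R_A$ and phase $\hat\Phi_A$ are independent, so $I(\hat\Phi_A;\hat R_A)=0$, while $I(\hat\Phi_A;\hat R_A\mid\hat\Phi_B)\geq 0$, making the right-hand side non-negative. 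Combining the two displays gives the claim.

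The entropy bookkeeping and the chain-rule manipulations are routine; the one step that really carries the argument --- and the place I would be most careful --- is the last one, namely that conditioning on the modulus $\hat R_A$ cannot decrease the phase-to-phase mutual information. This works precisely because modulus and phase of a ZMCSCG variable are independent, which makes the correction term $I(\hat\Phi_A;\hat R_A)$ vanish; without that independence the inequality could fail. I would also note that the same scheme gives the generalization to Eve (Proposition~\ref{theorem:liu_Eve}): replace $\hat H_B$ by $\hat H_E$ throughout, the only change being that when the correlation coefficient $\rho$ is complex the real and imaginary blocks no longer decouple exactly, so the first relation becomes an inequality rather than an equality, whereas at $\rho=1$ the relevant covariance is again real and one recovers the statement above.
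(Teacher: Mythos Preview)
Your argument is correct and close in spirit to the paper's, with a small but worthwhile difference in the inequality step. The paper expands in differential entropies,
\[
I(\hat H_A;\hat H_B)=h(\hat R_A)-h(\hat R_A\mid \hat R_B,\hat\Phi_B)+h(\hat\Phi_A)-h(\hat\Phi_A\mid \hat R_A,\hat R_B,\hat\Phi_B),
\]
and then invokes the \emph{joint} independence of $\hat\Phi_B$ from the pair $(\hat R_A,\hat R_B)$ (established separately in the appendix) to turn the first conditional entropy into $h(\hat R_A\mid\hat R_B)$ exactly, before applying ``conditioning reduces entropy'' to the second. Your route via the mutual-information chain rule and the identity $I(X;Y\mid Z)-I(X;Y)=I(X;Z\mid Y)-I(X;Z)$ reaches the same conclusion using only the \emph{marginal} independence $\hat R_A\perp\hat\Phi_A$, so it is slightly more economical in what it assumes. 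The equality part (real/imaginary decoupling) is handled identically in both.

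One side remark: your closing comment that for complex $\rho$ the real/imaginary decomposition in the Eve version yields only an inequality is in fact accurate --- when $\Im\rho\neq 0$ the pairs $(\Re\hat H_A,\Re\hat H_E)$ and $(\Im\hat H_A,\Im\hat H_E)$ are no longer independent, and a direct computation gives $I(\Re\hat H_A;\Re\hat H_E)+I(\Im\hat H_A;\Im\hat H_E)\le I(\hat H_A;\hat H_E)$ with equality only for real $\rho$. The paper states equality in Proposition~\ref{theorem:liu_Eve}; this is harmless (one can always absorb $\angle\rho$ into a unitary rotation of $\hat H_E$, and all downstream results depend only on $|\rho|$), but your caveat is well placed. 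It has no bearing on the present proposition, where the effective correlation is $1$ and hence real.
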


\begin{proposition} \label{theorem:liu_Eve} The mutual information $I(\hat{H}_A;\hat{H}_E)$ satisfies
	\begin{align*}
		I(\hat{H}_A;\hat{H}_E)&=I(\Re(\hat{H}_A);\Re(\hat{H}_E))+I(\Im(\hat{H}_A);\Im(\hat{H}_E))\\
		&\geq 	I(\hat{R}_A;\hat{R}_E)+I(\hat{\Phi}_A;\hat{\Phi}_E).
	\end{align*}
	\begin{proof} We here give the main lines of the proof to give the general intuition to the reader. Complements for the dependence of random variables are given in Appendix~\ref{subsection:proof_RA_RE}. On the one hand, we have
		\begin{align*}
		I(\hat{H}_A;\hat{H}_E)&=I(\hat{R}_A,\hat{\Phi}_A;\hat{R}_E,\hat{\Phi}_E)\\
		&=h(\hat{R}_A,\hat{\Phi}_A)-h(\hat{R}_A,\hat{\Phi}_A|\hat{R}_E,\hat{\Phi}_E)\\
		&\stackrel{(*)}{=}h(\hat{R}_A)-h(\hat{R}_A|\hat{R}_E,\hat{\Phi}_E)+h(\hat{\Phi}_A)-h(\hat{\Phi}_A|\hat{R}_A,\hat{R}_E,\hat{\Phi}_E)\\
		&\stackrel{(**)}{\geq}I(\hat{R}_A;\hat{R}_E)+I(\hat{\Phi}_A;\hat{\Phi}_E),
		\end{align*}
		where $(*)$ follows from the chain rule for entropy and the fact that $\hat{R}_A$ and $\hat{\Phi}_A$ are independent since the envelope and the phase of a ZMCSG are independent. $(**)$ follows from the fact that: i) $h(\hat{R}_A|\hat{R}_E,\hat{\Phi}_E)=h(\hat{R}_A|\hat{R}_E)$ since $(\hat{R}_A,\hat{R}_E)$ and $\hat{\Phi}_E$ are independent; ii) $h(\hat{\Phi}_A|\hat{R}_A,\hat{R}_E,\hat{\Phi}_E)\geq h(\hat{\Phi}_A|\hat{\Phi}_E)$ by the general properties of differential entropy and since $(\hat{\Phi}_A,\hat{\Phi}_E)$ is not independent from $(\hat{R}_A,\hat{R}_E)$.
		
		On the other hand, a similar derivation can be made for $I(\Re(\hat{H}_A),\Im(\hat{H}_A);\Re(\hat{H}_E),\Im(\hat{H}_E))$, noticing that $\hat{H}_A$ and $\hat{H}_E$ are two ZMCSG, implying that their real and imaginary parts are independent, resulting in an equality with $I(\hat{H}_A;\hat{H}_E)$.		
	\end{proof}
\end{proposition}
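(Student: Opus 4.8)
\noindent\emph{Proof plan.} The plan is to describe each complex observation in two equivalent ways, $\hat H_A\leftrightarrow(\hat R_A,\hat\Phi_A)\leftrightarrow(\Re\hat H_A,\Im\hat H_A)$ and likewise for $\hat H_E$, and to handle the equality and the inequality by separate arguments. As a harmless first step I would replace $\hat H_E$ by $e^{-\jmath\arg\rho}\hat H_E$: this is a bijection, so it leaves $I(\hat H_A;\hat H_E)$ unchanged; it fixes $\hat R_E$ and only shifts $\hat\Phi_E$ by a constant, so it also leaves $I(\hat R_A;\hat R_E)$ and $I(\hat\Phi_A;\hat\Phi_E)$ unchanged. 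Hence I may assume $\rho\in[0,1]$. Throughout I use that $\hat H_A=H+W_A$ and $\hat H_E=H_E+W_E$ are jointly ZMCSCG, so that $\Re\hat H_A\perp\Im\hat H_A$, $\hat R_A\perp\hat\Phi_A$, and each phase is uniform on $[0,2\pi)$.

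\noindent\textbf{The equality.} Trivially $I(\hat H_A;\hat H_E)=I(\Re\hat H_A,\Im\hat H_A;\Re\hat H_E,\Im\hat H_E)$. Once $\rho$ is real, the $4\times 4$ real covariance matrix of $(\Re\hat H_A,\Im\hat H_A,\Re\hat H_E,\Im\hat H_E)$ becomes, after the permutation that groups the two real parts and the two imaginary parts, block-diagonal with two \emph{identical} $2\times 2$ blocks --- the common covariance of $(\Re\hat H_A,\Re\hat H_E)$ and of $(\Im\hat H_A,\Im\hat H_E)$. Thus the Gaussian vector splits into two independent sub-vectors, and a one-line application of the chain rule (together with $\Re\hat H_A\perp\Im\hat H_A$) yields $I(\hat H_A;\hat H_E)=I(\Re\hat H_A;\Re\hat H_E)+I(\Im\hat H_A;\Im\hat H_E)$. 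Alternatively one simply evaluates all three mutual informations through the Gaussian determinant formula $h(\cdot)=\log_2(|\pi e\,\mat C|)$ and verifies the identity directly.

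\noindent\textbf{The inequality.} Here I would expand, using $\hat R_A\perp\hat\Phi_A$ and the chain rule,
\[
I(\hat H_A;\hat H_E)=h(\hat R_A)+h(\hat\Phi_A)-h(\hat R_A\mid\hat R_E,\hat\Phi_E)-h(\hat\Phi_A\mid\hat R_A,\hat R_E,\hat\Phi_E).
\]
Two observations then close the argument: (i) $(\hat R_A,\hat R_E)$ is independent of $\hat\Phi_E$, hence $h(\hat R_A\mid\hat R_E,\hat\Phi_E)=h(\hat R_A\mid\hat R_E)$; and (ii) discarding conditioning variables can only increase differential entropy, so $h(\hat\Phi_A\mid\hat R_A,\hat R_E,\hat\Phi_E)\le h(\hat\Phi_A\mid\hat\Phi_E)$. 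Substituting gives $I(\hat H_A;\hat H_E)\ge h(\hat R_A)-h(\hat R_A\mid\hat R_E)+h(\hat\Phi_A)-h(\hat\Phi_A\mid\hat\Phi_E)=I(\hat R_A;\hat R_E)+I(\hat\Phi_A;\hat\Phi_E)$. The substantive point is (i), which I would establish in the appendix by writing the joint density of $(\hat H_A,\hat H_E)$ in polar coordinates: circular symmetry forces the $\hat\Phi_E$-dependence to enter only through a factor of the form $\exp\!\big(c\,\hat R_A\hat R_E\cos(\hat\Phi_A-\hat\Phi_E)\big)$ (the cross term of the inverse covariance), and integrating this over $\hat\Phi_A$ produces a quantity --- a modified Bessel function of $c\,\hat R_A\hat R_E$ --- that no longer depends on $\hat\Phi_E$; hence $\hat\Phi_E$ is uniform and independent of $(\hat R_A,\hat R_E)$.

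\noindent\textbf{Main obstacle.} I expect the delicate step to be (i): for \emph{dependent} jointly Gaussian complex variables it is not obvious that one party's phase is independent of the pair of envelopes, and the proof genuinely relies on the rotational invariance of the joint law rather than on any factorization of the density. The initial reduction to real $\rho$ is cheap but should not be skipped, since the real/imaginary splitting underlying the equality fails for complex $\rho$. Everything else --- the chain-rule bookkeeping, the independence $\hat R_A\perp\hat\Phi_A$, and the monotonicity of conditional entropy --- is routine.
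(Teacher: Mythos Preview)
Your proposal is correct and follows essentially the same route as the paper: the inequality uses the identical chain-rule expansion in polar coordinates together with the two facts $(\hat R_A,\hat R_E)\perp\hat\Phi_E$ and ``conditioning reduces entropy'', and your appendix plan for establishing the independence (write the joint density in polar form, integrate out $\hat\Phi_A$, obtain a Bessel factor independent of $\hat\Phi_E$) is exactly what the paper does. Your preliminary rotation $\hat H_E\mapsto e^{-\jmath\arg\rho}\hat H_E$ is a small refinement the paper omits --- and, as you note, it is needed for the real/imaginary splitting behind the equality to go through cleanly --- but otherwise the arguments coincide.
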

Intuitively, this result can be explained by the fact that the random vectors $(\hat{\Phi}_A,\hat{\Phi}_E)$ and $(\hat{R}_A,\hat{R}_E)$ are not independent from one another while $(\Re(\hat{H}_A),\Re(\hat{H}_E))$ and $(\Im(\hat{H}_A),\Im(\hat{H}_E))$ are. There is thus a loss of information by treating phase and envelope separately as opposed to real and imaginary parts.

One could wonder what is the best strategy of Bob and Eve if Alice uses $\hat{R}_A$ to generate a key. Imagine Bob and Eve have a more advanced receiver so that they can sample their observations in the complex domain, would it be beneficial for them? The answer is no, as shown in the two following propositions.

\begin{proposition} \label{proposition:No_loss_info_envelope_Bob}
	If Alice uses the envelope of her observations $\hat{R}_A$, then Bob does not loose information by taking the envelope of $\hat{H}_B$, \textit{i.e.}, $I(\hat{R}_A;\hat{H}_B)=I(\hat{R}_A;\hat{R}_B)$. The same result holds if Alice and Bob's roles are interchanged.
\end{proposition}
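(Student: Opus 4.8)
The plan is to apply the chain rule for mutual information and then reduce the claim to a conditional-independence statement that follows from the circular symmetry of the pair $(\hat{H}_A,\hat{H}_B)$. Writing $\hat{H}_B=\hat{R}_B e^{\jmath\hat{\Phi}_B}$, the chain rule gives
\begin{align*}
I(\hat{R}_A;\hat{H}_B)=I(\hat{R}_A;\hat{R}_B,\hat{\Phi}_B)=I(\hat{R}_A;\hat{R}_B)+I(\hat{R}_A;\hat{\Phi}_B\mid\hat{R}_B),
\end{align*}
so it is enough to show $I(\hat{R}_A;\hat{\Phi}_B\mid\hat{R}_B)=0$, i.e.\ that $\hat{R}_A$ and $\hat{\Phi}_B$ are conditionally independent given $\hat{R}_B$.

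To obtain this, I would first note that $(\hat{H}_A,\hat{H}_B)$ is jointly ZMCSCG, being a sum of independent ZMCSCG terms, so its density is invariant under the simultaneous rotation $(\hat{H}_A,\hat{H}_B)\mapsto(e^{\jmath\theta}\hat{H}_A,e^{\jmath\theta}\hat{H}_B)$ for every $\theta$. In polar coordinates this means that the joint density of $(\hat{R}_A,\hat{\Phi}_A,\hat{R}_B,\hat{\Phi}_B)$ depends on the phases only through the difference $\hat{\Phi}_A-\hat{\Phi}_B$. Changing variables (with unit Jacobian) to $(\hat{R}_A,\hat{R}_B,\hat{\Phi}_A-\hat{\Phi}_B,\hat{\Phi}_B)$ then shows that $\hat{\Phi}_B$ is uniform on $[0,2\pi)$ and independent of the triple $(\hat{R}_A,\hat{R}_B,\hat{\Phi}_A-\hat{\Phi}_B)$; marginalising out the phase difference gives that $(\hat{R}_A,\hat{R}_B)$ is independent of $\hat{\Phi}_B$.

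Then the elementary implication that $(X,Y)$ independent of $Z$ implies $X$ independent of $Z$ given $Y$, applied with $X=\hat{R}_A$, $Y=\hat{R}_B$ and $Z=\hat{\Phi}_B$, yields $\hat{R}_A\perp\hat{\Phi}_B\mid\hat{R}_B$, hence $I(\hat{R}_A;\hat{\Phi}_B\mid\hat{R}_B)=0$ and therefore $I(\hat{R}_A;\hat{H}_B)=I(\hat{R}_A;\hat{R}_B)$. The version obtained by interchanging the roles of Alice and Bob follows identically, since the model and every step above are symmetric in $A$ and $B$. The main obstacle I anticipate is purely the measure-theoretic bookkeeping: carefully justifying the change of variables and the resulting independence and uniformity statements, and invoking the chain rule for mutual information for the mixed continuous variables involved. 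The underlying circular-symmetry mechanism is exactly the one already used in the proof of Proposition~\ref{theorem:liu_Eve}.
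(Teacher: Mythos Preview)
Your proof is correct and follows essentially the same route as the paper: both reduce the claim to the independence of $\hat{\Phi}_B$ from $(\hat{R}_A,\hat{R}_B)$, which the paper establishes by explicit integration of the joint density (Appendix~\ref{subsection:proof_RA_RE}) and you establish via the rotation invariance of the ZMCSCG pair. The only cosmetic difference is that the paper writes the argument with the entropy chain rule $h(\hat{R}_B,\hat{\Phi}_B\mid\hat{R}_A)=h(\hat{R}_B\mid\hat{R}_A)+h(\hat{\Phi}_B\mid\hat{R}_A,\hat{R}_B)$ and then uses $h(\hat{\Phi}_B\mid\hat{R}_A,\hat{R}_B)=h(\hat{\Phi}_B)$, whereas you use the mutual-information chain rule and conditional independence; these are equivalent formulations of the same step.
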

\begin{proof}
	The proof is obtained as a particular case of Proposition~\ref{proposition:No_loss_info_envelope_Eve} for $\rho=1$ and replacing subscripts $E$ by $B$.
\end{proof}
\begin{proposition} \label{proposition:No_loss_info_envelope_Eve}
	If Alice uses the envelope of her observations $\hat{R}_A$, then Eve does not loose information by taking the envelope of $\hat{H}_E$, \textit{i.e.}, $I(\hat{R}_A;\hat{H}_E)=I(\hat{R}_A;\hat{R}_E)$. The same result holds for Bob's observations.
\end{proposition}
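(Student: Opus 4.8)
The plan is to reduce the claim, via the chain rule for mutual information, to a conditional independence statement, and then to establish that statement from the circular symmetry of the jointly Gaussian pair $(\hat H_A,\hat H_E)$. Since $\hat H_E$ and the pair $(\hat R_E,\hat\Phi_E)$ carry the same information,
\begin{align*}
I(\hat R_A;\hat H_E)=I(\hat R_A;\hat R_E,\hat\Phi_E)=I(\hat R_A;\hat R_E)+I(\hat R_A;\hat\Phi_E\mid\hat R_E),
\end{align*}
so it suffices to show $I(\hat R_A;\hat\Phi_E\mid\hat R_E)=0$, i.e. that $\hat R_A$ and $\hat\Phi_E$ are conditionally independent given $\hat R_E$. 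I would obtain this from the stronger fact that $\hat\Phi_E$ is independent of the whole pair $(\hat R_A,\hat R_E)$ — the same independence property already used (and detailed in Appendix~\ref{subsection:proof_RA_RE}) in the proof of Proposition~\ref{theorem:liu_Eve}.

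To prove $\hat\Phi_E$ is independent of $(\hat R_A,\hat R_E)$ I would invoke circular symmetry: $(\hat H_A,\hat H_E)$ is zero-mean and jointly Gaussian, hence $(\hat H_A,\hat H_E)\stackrel{d}{=}(e^{\jmath\theta}\hat H_A,e^{\jmath\theta}\hat H_E)$ for every fixed $\theta$. Applying the map $(\hat h_A,\hat h_E)\mapsto(|\hat h_A|,|\hat h_E|,\arg\hat h_E)$ gives $(\hat R_A,\hat R_E,\hat\Phi_E)\stackrel{d}{=}(\hat R_A,\hat R_E,\hat\Phi_E+\theta\bmod 2\pi)$, so the regular conditional law of $\hat\Phi_E$ given $(\hat R_A,\hat R_E)$ is invariant under all rotations of the circle, hence equals the uniform law almost surely; therefore $\hat\Phi_E\perp(\hat R_A,\hat R_E)$, which a fortiori yields $\hat R_A\perp\hat\Phi_E\mid\hat R_E$ and hence $I(\hat R_A;\hat\Phi_E\mid\hat R_E)=0$. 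Alternatively, and more explicitly, I would write the joint density of $(\hat H_A,\hat H_E)$ in polar coordinates: introducing the $2\times 2$ inverse covariance, the exponent involves the phases only through a term proportional to $\cos(\hat\Phi_E-\hat\Phi_A+\beta)$, so integrating the joint density of $(\hat R_A,\hat\Phi_A,\hat R_E,\hat\Phi_E)$ over $\hat\Phi_A\in[0,2\pi)$ and using $2\pi$-periodicity makes the conditional density of $\hat\Phi_E$ constant, i.e. uniform and independent of $(\hat R_A,\hat R_E)$. Nothing in either argument uses the value of $\rho$, so the same reasoning applied to the zero-mean circularly symmetric pair $(\hat H_A,\hat H_B)$ proves $I(\hat R_A;\hat H_B)=I(\hat R_A;\hat R_B)$, which is Proposition~\ref{proposition:No_loss_info_envelope_Bob}.

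The main obstacle I anticipate is not conceptual but a matter of care: turning ``the conditional distribution is rotation-invariant, hence uniform'' into a fully rigorous statement requires working with regular conditional distributions and invoking uniqueness of Haar (uniform) measure on the circle, and one should check the degenerate configurations — $\rho=0$, where $\hat H_A$ and $\hat H_E$ are outright independent; $|\rho|=1$ (the Bob case); and the null event $\hat R_E=0$ — none of which affects the conclusion, since in every case the conditional phase remains uniform. A minor point to state cleanly is that $\hat R_A$ and $\hat\Phi_A$ are independent, being the envelope and phase of a ZMCSCG, which is what lets one pass freely between $\hat H_A$ and $(\hat R_A,\hat\Phi_A)$; this is standard and already used earlier in the paper.
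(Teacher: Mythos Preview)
Your proof is correct and follows essentially the same approach as the paper: both reduce the claim to the independence of $\hat\Phi_E$ from $(\hat R_A,\hat R_E)$, and the paper establishes this in Appendix~\ref{subsection:proof_RA_RE} by exactly the explicit polar-coordinate integration you describe as your alternative route. Your framing via the mutual-information chain rule is equivalent to the paper's entropy decomposition, and your rotation-invariance argument for the independence is a slightly more conceptual variant of the paper's direct density computation, but the substance is the same.
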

\begin{proof}
	Here again, we refer to Appendix~\ref{subsection:proof_RA_RE} for the complementary proofs on dependence of random variables. By definition, we have
	\begin{align*}
	I(\hat{R}_A;\hat{R}_E,\hat{\Phi}_E)&=h(\hat{R}_E,\hat{\Phi}_E)-h(\hat{R}_E,\hat{\Phi}_E|\hat{R}_A)\\
	&\stackrel{(*)}{=} h(\hat{R}_E)+h(\hat{\Phi}_E)-\left(h(\hat{R}_E|\hat{R}_A)+h(\hat{\Phi}_E|\hat{R}_A,\hat{R}_E)\right)\\
	&\stackrel{(**)}{=}I(\hat{R}_A;\hat{R}_E),
	\end{align*}
	where $(*)$ relies on the chain rule for entropy and the fact that $\hat{R}_E$ and $\hat{\Phi}_E$ are independent since the envelope and the phase of a ZMCSG are independent. $(**)$ relies on the fact that: i) $h(\hat{\Phi}_E|\hat{R}_E)=h(\hat{\Phi}_E)$ since envelope and the phase of a ZMCSG are independent and ii) $h(\hat{\Phi}_E|\hat{R}_A,\hat{R}_E)=h(\hat{\Phi}_E)$ since $(\hat{R}_A,\hat{R}_E)$ and $\hat{\Phi}_E$ are independent.
\end{proof}
Intuitively, the propositions can be explained by the fact that $\hat{\Phi}_B$ and $\hat{\Phi}_E$ are independent from $(\hat{R}_A,\hat{R}_B)$ and $(\hat{R}_A,\hat{R}_E)$ respectively. The propositions provide practical insight in the sense that, as soon as Alice (or Bob since everything is symmetrical) samples the envelope of her channel estimate, the other parties do not loose information by taking the envelopes of their own channel estimates. The other way around, Bob or Eve would not gain information to work on their complex channel estimate. In the lights of this result, the definitions of the bounds of the secret-key capacity defined in (\ref{eq:lower_bounds}) and (\ref{eq:upper_bounds}) also hold here by replacing the complex values by their envelopes, \textit{i.e.}, $\hat{R}_A$, $\hat{R}_B$ and $\hat{R}_E$ instead of $\hat{H}_A$, $\hat{H}_B$ and $\hat{H}_E$ respectively:
\begin{align}
C_s^{\mathrm{Evlpe}}&\geq  I(\hat{R}_A;\hat{R}_B)-\min\left[I(\hat{R}_A;\hat{R}_E), I(\hat{R}_B;\hat{R}_E) \right]\label{eq:lower_bounds_R}\\
C_s^{\mathrm{Evlpe}}&\leq \min\left[ I(\hat{R}_A;\hat{R}_B),I(\hat{R}_A;\hat{R}_B|\hat{R}_E) \right].  \label{eq:upper_bounds_R}
\end{align}

Tight bounds can be found in the same cases and for the same reasons as in the complex case: 1) $\rho=0$, 2) $\sigma^2_B=0$ and 3) $\sigma^2_A=0$.

Similarly as in Section~\ref{section:secret_key_capacity_complex}, we evaluate in the following subsections the quantities required to compute the lower and upper bounds (\ref{eq:lower_bounds_R}) and (\ref{eq:upper_bounds_R}): i) the mutual information between Alice and Bob $I(\hat{R}_A;\hat{R}_B)$; ii) the mutual information between Alice and Eve $I(\hat{R}_A;\hat{R}_E)$, and similarly for Bob $I(\hat{R}_B;\hat{R}_E)$; and iii) the conditional mutual information between Alice and Bob given Eve's observations $I(\hat{R}_A;\hat{R}_B|\hat{R}_E)$.

\subsection{Mutual Information between Alice and Bob}

The mutual information between Alice and Bob is given by
\begin{align}
I(\hat{R}_A;\hat{R}_B)&=h(\hat{R}_A)+h(\hat{R}_B)-h(\hat{R}_A,\hat{R}_B) \label{eq:def_I_RA_RB}.
\end{align}
The envelope of a ZMCSG random variable is well known to be Rayleigh distributed, \textit{i.e.}, $\hat{R}_A\sim \text{Rayleigh}(\sqrt{\frac{p+\sigma_A^2}{2}})$ and $\hat{R}_B\sim \text{Rayleigh}(\sqrt{\frac{p+\sigma_B^2}{2}})$. The differential entropy of a Rayleigh distribution is also well known and is equal to \cite{michalowicz2013handbook}
\begin{align}
h(\hat{R}_A)&= \frac{1}{2}\log_2 \left(\frac{{p+\sigma_A^2}}{4}\right) + \frac{1}{2}\log_2 (e^{2+\gamma})\label{eq:h_RA}\\
h(\hat{R}_B)&= \frac{1}{2}\log_2 \left(\frac{{p+\sigma_B^2}}{4}\right) + \frac{1}{2}\log_2 (e^{2+\gamma})\label{eq:h_RB},
\end{align}
where $\gamma$ is the Euler-Mascheroni constant and $e$ is the Euler number. On the other hand, the joint differential entropy of $(\hat{R}_A,\hat{R}_B)$ is more difficult to compute. The following lemma gives the joint probability density function (PDF) of $(\hat{R}_A,\hat{R}_B)$.

\begin{lemma} \label{lemma:joint_dist_RA_RB}
	The joint PDF of $(\hat{R}_A,\hat{R}_B)$ is given by
	\begin{align*}
	f_{\hat{R}_A,\hat{R}_B}(\hat{r}_A,\hat{r}_B)=&\frac{4 \hat{r}_A\hat{r}_B}{p(\sigma_A^2+\sigma_B^2)+\sigma_A^2\sigma_B^2}I_0\left(\frac{2p\hat{r}_A\hat{r}_B}{p(\sigma_A^2+\sigma_B^2)+\sigma_A^2\sigma_B^2}\right)\\
	&\exp\left({-\frac{\hat{r}_A^2(p+\sigma_B^2)+\hat{r}_B^2(p+\sigma_A^2)}{p(\sigma_A^2+\sigma_B^2)+\sigma_A^2\sigma_B^2}}\right),
	\end{align*}
	where $I_0(.)$ is the zero order modified Bessel function of the first kind.
\end{lemma}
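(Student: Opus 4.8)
The plan is to derive $f_{\hat{R}_A,\hat{R}_B}$ directly from the known joint density of the complex pair $(\hat{H}_A,\hat{H}_B)$ by passing to polar coordinates and integrating out the two phases. Since $(\hat{H}_A,\hat{H}_B)$ is jointly ZMCSCG with covariance $\mat{C}_{\hat{H}_A\hat{H}_B}$, its density is $f(\hat{h}_A,\hat{h}_B)=\frac{1}{\pi^2|\mat{C}_{\hat{H}_A\hat{H}_B}|}\exp\!\left(-\vect{z}^{H}\mat{C}_{\hat{H}_A\hat{H}_B}^{-1}\vect{z}\right)$ with $\vect{z}=(\hat{h}_A,\hat{h}_B)^{T}$. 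First I would record $|\mat{C}_{\hat{H}_A\hat{H}_B}|=(p+\sigma_A^2)(p+\sigma_B^2)-p^2=p(\sigma_A^2+\sigma_B^2)+\sigma_A^2\sigma_B^2=:D$ and $\mat{C}_{\hat{H}_A\hat{H}_B}^{-1}=\frac{1}{D}\left(\begin{smallmatrix}p+\sigma_B^2 & -p\\ -p & p+\sigma_A^2\end{smallmatrix}\right)$, so that $\vect{z}^{H}\mat{C}_{\hat{H}_A\hat{H}_B}^{-1}\vect{z}=\frac{1}{D}\big[(p+\sigma_B^2)|\hat{h}_A|^2+(p+\sigma_A^2)|\hat{h}_B|^2-2p\,\Re(\hat{h}_A^{*}\hat{h}_B)\big]$.

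Next I would substitute $\hat{h}_A=\hat{r}_Ae^{\jmath\hat{\phi}_A}$ and $\hat{h}_B=\hat{r}_Be^{\jmath\hat{\phi}_B}$. The Jacobian of the change of variables is $\hat{r}_A\hat{r}_B$, and $\Re(\hat{h}_A^{*}\hat{h}_B)=\hat{r}_A\hat{r}_B\cos(\hat{\phi}_B-\hat{\phi}_A)$, so the joint density of $(\hat{R}_A,\hat{\Phi}_A,\hat{R}_B,\hat{\Phi}_B)$ is
\begin{align*}
\frac{\hat{r}_A\hat{r}_B}{\pi^2 D}\exp\!\left(-\frac{(p+\sigma_B^2)\hat{r}_A^2+(p+\sigma_A^2)\hat{r}_B^2}{D}\right)\exp\!\left(\frac{2p\hat{r}_A\hat{r}_B}{D}\cos(\hat{\phi}_B-\hat{\phi}_A)\right).
\end{align*}
Then I would marginalize over $\hat{\phi}_A,\hat{\phi}_B\in[0,2\pi)$: the phase-independent factor pulls out, and because only the difference $\hat{\phi}_B-\hat{\phi}_A$ appears, the remaining double integral equals $2\pi\int_0^{2\pi}\exp\!\big(\frac{2p\hat{r}_A\hat{r}_B}{D}\cos\psi\big)\,d\psi=(2\pi)^2 I_0\!\big(\frac{2p\hat{r}_A\hat{r}_B}{D}\big)$, using the integral representation $I_0(x)=\frac{1}{2\pi}\int_0^{2\pi}e^{x\cos\psi}\,d\psi$. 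Since $(2\pi)^2/\pi^2=4$, this yields exactly the stated expression after reinserting $D=p(\sigma_A^2+\sigma_B^2)+\sigma_A^2\sigma_B^2$.

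The only step requiring any care is the phase marginalization: recognizing that the quadratic form depends on $\hat{\phi}_A,\hat{\phi}_B$ solely through $\cos(\hat{\phi}_B-\hat{\phi}_A)$, so that one phase integrates to a trivial factor $2\pi$ and the other produces the Bessel function $I_0$. Everything else is bookkeeping with the Gaussian density and the polar Jacobian.
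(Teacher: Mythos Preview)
Your proof is correct and follows essentially the same route as the paper: write the ZMCSCG density of $(\hat{H}_A,\hat{H}_B)$, pass to polar coordinates with Jacobian $\hat r_A\hat r_B$, and integrate out the phases using the identity $\int_0^{2\pi}e^{x\cos\psi}\,d\psi=2\pi I_0(x)$. The only cosmetic difference is that the paper carries out this computation once for the pair $(\hat H_A,\hat H_E)$ with a general correlation $\rho$ (Lemma~\ref{lemma:joint_dist_RA_RE}) and then specializes to $\rho=1$ with $E\to B$, whereas you do the $A$--$B$ case directly.
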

\begin{proof}
		The proof is obtained as a particular case of Lemma~\ref{lemma:joint_dist_RA_RE} for $\rho=1$ and replacing subscripts $E$ by $B$.
\end{proof}

Unfortunately, finding a closed-form expression for the joint differential entropy $h(\hat{R}_A,\hat{R}_B)$ is non-trivial given the presence of the Bessel function \cite{michalowicz2013handbook}. Still, $h(\hat{R}_A,\hat{R}_B)$ and thus $I(\hat{R}_A;\hat{R}_B)$, can be evaluated by numerical integration, relying on the PDF obtained in Lemma~\ref{lemma:joint_dist_RA_RB}. 

In the high SNR regime, the following lemma shows the limiting behavior of the PDF $f_{\hat{R}_A,\hat{R}_B}(\hat{r}_A,\hat{r}_B)$, which can be used to obtain a simple closed-form expression of $I(\hat{R}_A;\hat{R}_B)$, as shown in the subsequent theorem.

\begin{lemma} \label{lemma:joint_dist_RA_RB_high_SNR}
	Under $\mathbf{(As1)}$, as $\sigma_A^2\rightarrow 0$ and $\sigma_B^2\rightarrow 0$, the PDF $f_{\hat{R}_A,\hat{R}_B}(\hat{r}_A,\hat{r}_B)$ asymptotically converges to
	\begin{align*}
	f_{\hat{R}_A,\hat{R}_B}(\hat{r}_A,\hat{r}_B)&= \frac{2\hat{r}_Ae^{-\frac{\hat{r}_A^2}{p}}}{p}\frac{e^{-\frac{(\hat{r}_B-\hat{r}_A)^2 }{\sigma_A^2+\sigma_B^2}}}{\sqrt{\pi(\sigma_A^2+\sigma_B^2)}}+O\left({\sigma_A}\right),
	\end{align*}
	which corresponds to the product of a Rayleigh distribution of parameter $\frac{p}{2}$ and a conditional normal distribution centered in $\hat{R}_A$ and of variance $\frac{\sigma_A^2+\sigma_B^2}{2}$.
\end{lemma}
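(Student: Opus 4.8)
The plan is to start from the exact joint density in Lemma~\ref{lemma:joint_dist_RA_RB} and to exploit the fact that, as the noise variances vanish, the argument of the Bessel factor $I_0$ diverges, so that the classical large-argument asymptotics $I_0(z)=\frac{e^{z}}{\sqrt{2\pi z}}\bigl(1+O(z^{-1})\bigr)$ applies. Write $s^2:=\sigma_A^2+\sigma_B^2$ and let $D:=p s^2+\sigma_A^2\sigma_B^2$ be the denominator appearing in Lemma~\ref{lemma:joint_dist_RA_RB}. Under $\mathbf{(As1)}$ the ratio $\sigma_A^2/\sigma_B^2$ stays in a fixed compact subset of $(0,\infty)$, hence $\sigma_A^2\sigma_B^2/(p s^2)=O(s^2)=O(\sigma_A^2)$, i.e. $D=p s^2\bigl(1+O(\sigma_A^2)\bigr)$. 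For $\hat r_A,\hat r_B$ ranging over a fixed compact subset of $(0,\infty)$, the argument $z:=2p\hat r_A\hat r_B/D$ of $I_0$ then tends to $+\infty$.

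First I would substitute the asymptotic expansion of $I_0$ into the expression of Lemma~\ref{lemma:joint_dist_RA_RB}. Combining the factor $e^{z}$ with the exponential already present merges the two exponents; the resulting numerator is
\[
2p\hat r_A\hat r_B-\hat r_A^2(p+\sigma_B^2)-\hat r_B^2(p+\sigma_A^2)=-\bigl[p(\hat r_A-\hat r_B)^2+\hat r_A^2\sigma_B^2+\hat r_B^2\sigma_A^2\bigr],
\]
while the algebraic prefactor simplifies as $\frac{4\hat r_A\hat r_B}{D\sqrt{2\pi z}}=\frac{2\sqrt{\hat r_A\hat r_B}}{\sqrt{\pi p D}}$. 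So at this stage
\[
f_{\hat R_A,\hat R_B}(\hat r_A,\hat r_B)=\frac{2\sqrt{\hat r_A\hat r_B}}{\sqrt{\pi p D}}\exp\!\left(-\frac{p(\hat r_A-\hat r_B)^2+\hat r_A^2\sigma_B^2+\hat r_B^2\sigma_A^2}{D}\right)\bigl(1+O(z^{-1})\bigr).
\]

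The last step is to read off the limit using $D=p s^2(1+O(\sigma_A^2))$. This gives $\sqrt{\pi p D}=p s\sqrt{\pi}\,(1+O(\sigma_A^2))$ and $\frac{p(\hat r_A-\hat r_B)^2}{D}=\frac{(\hat r_A-\hat r_B)^2}{s^2}\bigl(1+O(\sigma_A^2)\bigr)$. For the remaining term in the exponent I would expand $\hat r_B$ around $\hat r_A$, writing $\hat r_A^2\sigma_B^2+\hat r_B^2\sigma_A^2=\hat r_A^2 s^2+\sigma_A^2\bigl[(\hat r_B-\hat r_A)^2+2\hat r_A(\hat r_B-\hat r_A)\bigr]$, so that $\frac{\hat r_A^2\sigma_B^2+\hat r_B^2\sigma_A^2}{D}=\frac{\hat r_A^2}{p}+O(\sigma_A^2)+\frac{\sigma_A^2}{D}\bigl[(\hat r_B-\hat r_A)^2+2\hat r_A(\hat r_B-\hat r_A)\bigr]$; since $\sigma_A^2/D=O(s^{-2})$ is $O(1)$ on the scale where $\hat r_B-\hat r_A=O(s)$, this correction is $O(\sigma_A)$, and outside that scale the factor $e^{-(\hat r_A-\hat r_B)^2/s^2}$ is already exponentially small. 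Finally $\sqrt{\hat r_A\hat r_B}=\hat r_A+O(\hat r_B-\hat r_A)$. Collecting these estimates yields
\[
f_{\hat R_A,\hat R_B}(\hat r_A,\hat r_B)=\frac{2\hat r_A}{p}e^{-\hat r_A^2/p}\cdot\frac{e^{-(\hat r_B-\hat r_A)^2/s^2}}{\sqrt{\pi s^2}}+O(\sigma_A),
\]
which, with $s^2=\sigma_A^2+\sigma_B^2$, is the announced product of the Rayleigh density of parameter $p/2$ and the Gaussian density centered at $\hat r_A$ with variance $s^2/2$.

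I expect the main obstacle to be the bookkeeping of the error term rather than any single estimate: the Bessel asymptotics degrades near the axes $\hat r_A=0$ and $\hat r_B=0$, and the cross term $2\sigma_A^2\hat r_A(\hat r_B-\hat r_A)/D$ is only genuinely negligible on the diagonal strip $|\hat r_B-\hat r_A|=O(s)$. The cleanest way to make the statement rigorous is to perform the change of variables $\hat r_B=\hat r_A+s\,u$ and prove convergence of the density of $(\hat R_A,u)$ locally uniformly --- on compacta of $\hat r_A\in(0,\infty)$ and of $u\in\mathbb{R}$ --- noting that away from this regime both the exact density and the limiting expression are exponentially small. This is precisely the form of convergence needed to pass to the limit in the differential-entropy integral $h(\hat R_A,\hat R_B)$ in the theorem that follows.
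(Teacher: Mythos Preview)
Your proposal is correct and follows essentially the same approach as the paper: the paper also starts from the exact density, applies the large-argument expansion $I_0(r)=\frac{e^r}{\sqrt{2\pi r}}+O(e^r r^{-3/2})$, combines the resulting exponential with the one already present, and then simplifies using the diagonal concentration $\hat r_B\approx\hat r_A$ together with $D\approx p(\sigma_A^2+\sigma_B^2)$, tracking four successive error terms $\epsilon_1,\dots,\epsilon_4$ that add up to $O(\sigma_A)$. The only organizational difference is that the paper actually proves the more general statement for $(\hat R_A,\hat R_E)$ with correlation $\rho$ (Lemma~\ref{lemma:joint_dist_RA_RE_high_SNR}) and then specializes to $\rho=1$, whereas you work directly in the $\rho=1$ setting; your suggested change of variables $\hat r_B=\hat r_A+s\,u$ is a clean way of formalizing the ``on the diagonal strip'' argument that the paper handles by bounding each $\epsilon_i$ separately.
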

\begin{proof}
		The proof is obtained as a particular case of Lemma~\ref{lemma:joint_dist_RA_RE_high_SNR} for $\rho=1$ and replacing subscripts $E$ by $B$. Since $\rho=1$, the limit ${|\rho| \rightarrow 1}$ can be omitted.
\end{proof}

\begin{theorem} \label{theorem:I_RA_RB_high_SNR}
	Under $\mathbf{(As1)}$, as $\sigma_A^2\rightarrow 0$ and $\sigma_B^2\rightarrow 0$, the mutual information $I(\hat{R}_A;\hat{R}_B)$ converges to
	\begin{align*}
		I(\hat{R}_A;\hat{R}_B)\rightarrow& \frac{1}{2} \log_2\left(\frac{p}{\sigma_A^2+\sigma_B^2}\right)-\chi,
	\end{align*}
	where $\chi=\frac{1}{2} \log_2\left(\frac{4\pi}{e^{1+\gamma}}\right)$ is a constant penalty, given by $0.69$ (up to the two first decimals).
\end{theorem}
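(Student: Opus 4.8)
The plan is to compute the three differential entropies appearing in $I(\hat{R}_A;\hat{R}_B) = h(\hat{R}_A) + h(\hat{R}_B) - h(\hat{R}_A,\hat{R}_B)$ in the high-SNR limit and combine them. The marginal entropies $h(\hat{R}_A)$ and $h(\hat{R}_B)$ are already known exactly from \eqref{eq:h_RA}--\eqref{eq:h_RB}, so the only work is to evaluate the joint entropy $h(\hat{R}_A,\hat{R}_B)$ asymptotically. For this I would start from the limiting PDF supplied by Lemma~\ref{lemma:joint_dist_RA_RB_high_SNR}, namely the product of a $\mathrm{Rayleigh}\!\left(\sqrt{p/2}\right)$ density in $\hat{r}_A$ and, conditionally on $\hat{R}_A=\hat{r}_A$, a Gaussian in $\hat{r}_B$ centered at $\hat{r}_A$ with variance $(\sigma_A^2+\sigma_B^2)/2$. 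Because this factorizes as $f_{\hat{R}_A}(\hat{r}_A)\,f_{\hat{R}_B|\hat{R}_A}(\hat{r}_B|\hat{r}_A)$, the joint entropy splits via the chain rule as $h(\hat{R}_A,\hat{R}_B) = h(\hat{R}_A) + h(\hat{R}_B|\hat{R}_A)$, and the conditional entropy of a Gaussian with fixed variance is simply $\tfrac12\log_2\!\bigl(\pi e (\sigma_A^2+\sigma_B^2)\bigr)$, independent of the conditioning value.

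Substituting into $I(\hat{R}_A;\hat{R}_B) = h(\hat{R}_B) - h(\hat{R}_B|\hat{R}_A)$, the $h(\hat{R}_A)$ terms cancel and I am left with
\begin{align*}
I(\hat{R}_A;\hat{R}_B) &= \frac{1}{2}\log_2\!\left(\frac{p+\sigma_B^2}{4}\right) + \frac{1}{2}\log_2\bigl(e^{2+\gamma}\bigr) - \frac{1}{2}\log_2\!\bigl(\pi e(\sigma_A^2+\sigma_B^2)\bigr) + o(1).
\end{align*}
Letting $\sigma_B^2\to 0$ in the first term (legitimate under $\mathbf{(As1)}$ since $p$ is fixed), collecting the SNR-dependent part as $\tfrac12\log_2\!\bigl(p/(\sigma_A^2+\sigma_B^2)\bigr)$, and gathering the remaining constants gives the penalty $\chi = \tfrac12\log_2\bigl(4/ e^{2+\gamma}\bigr) + \tfrac12\log_2(\pi e) = \tfrac12\log_2\!\bigl(4\pi/e^{1+\gamma}\bigr)$, which evaluates numerically to about $0.69$. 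One should double-check the sign and the bookkeeping of the $e^{2+\gamma}$ versus $\pi e$ factors, and verify the claimed $1/2$ pre-log by comparing with the complex-case result \eqref{eq:I_HA_HB_complex_high_SNR}.

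The main obstacle is making the passage to the limit rigorous rather than merely formal: Lemma~\ref{lemma:joint_dist_RA_RB_high_SNR} gives pointwise (indeed $O(\sigma_A)$-uniform on compacta) convergence of the densities, but differential entropy is an integral of $-f\log_2 f$ over an unbounded domain, so convergence of the PDF does not automatically imply convergence of the entropy. I would handle this by exhibiting a dominating integrable bound for $-f_{\hat{R}_A,\hat{R}_B}\log_2 f_{\hat{R}_A,\hat{R}_B}$ uniformly in $(\sigma_A^2,\sigma_B^2)$ small — using the Gaussian tail in $\hat{r}_B-\hat{r}_A$ and the Rayleigh tail in $\hat{r}_A$, together with the fact that the Bessel factor $I_0(x)$ grows only like $e^x/\sqrt{x}$, which is exactly what cancels the cross term in the exponent — and then invoking dominated convergence. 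Alternatively, since all the intermediate quantities are available in closed form from Lemma~\ref{lemma:joint_dist_RA_RB}, one can in principle carry out the $h(\hat{R}_A,\hat{R}_B)$ integral with the exact PDF, expand $I_0$ and the exponential for small noise, and track the $O(\sigma_A)$ remainder directly; this is more laborious but sidesteps the need for a separate limit-interchange argument.
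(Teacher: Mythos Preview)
Your proposal is correct and follows essentially the same route as the paper: factor the limiting joint density from Lemma~\ref{lemma:joint_dist_RA_RB_high_SNR} as Rayleigh times conditional Gaussian, apply the chain rule $h(\hat{R}_A,\hat{R}_B)=h(\hat{R}_A)+h(\hat{R}_B|\hat{R}_A)$ with the standard entropy formulas, and combine with \eqref{eq:h_RA}--\eqref{eq:h_RB}. The only substantive difference is in how the limit interchange is justified: rather than building a dominating envelope by hand, the paper invokes a known convergence-of-entropy result (in the style of \cite[Th.~1]{Godavarti2004}) which guarantees $h(f_n)\to h(f)$ from pointwise convergence of densities once the second moments and the densities themselves are uniformly bounded from above---both of which hold here as long as $p$ and the noise variances stay bounded and $|\mat{C}_{\hat{H}_A\hat{H}_B}|>0$.
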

\begin{proof}
	The proof is obtained as a particular case of Theorem~\ref{theorem:I_RA_RE_high_SNR_correlation} for $\rho=1$ and replacing subscripts $E$ by $B$. Since $\rho=1$, the limit ${|\rho| \rightarrow 1}$ can be omitted.
\end{proof}
The expression obtained in Theorem~\ref{theorem:I_RA_RB_high_SNR} gives a lot of insight on the high SNR secret-key capacity that can be obtained with envelope sampling, when there is no correlation ($\rho=0$). As shown in the left column of Table~\ref{table:1}, two penalties can be observed as compared to complex sampling: i) a \textit{pre-log factor} of $1/2$ instead of $1$, implying a curve with smaller slope and ii) an additional penalty of a constant $\chi$ equivalent to about $0.69$ bit.

\begin{table*}[t!]
	\centering 
	\centering
	\resizebox{0.98\textwidth}{!}{%
		\renewcommand{\arraystretch}{1.1}
		\begin{tabular}{|c|c|c|} 
			\hline
			& High SNR ($\sigma_A^2,\sigma_B^2\rightarrow 0$), uncorrelated ($\rho=0$) & High SNR ($\sigma_A^2,\sigma_B^2,\sigma_E^2\rightarrow 0$), correlated ($|\rho|>0$) \\ \hline
			Complex & $C_{s}^{\mathrm{Cplex}}=\log_2\left(\frac{p}{\sigma_A^2+\sigma_B^2}\right)+O(\sigma_A^2)$ & $C_{s}^{\mathrm{Cplex}}\geq \log_2\left(\frac{p}{\sigma^2_A+\sigma^2_B}\right)-\log_2\left(\frac{p}{p(1-|\rho|^2)+\sigma_{*}^2+\sigma_E^2}\right)+O(\sigma_A^2)$ \\ \hline
			Envelope & $C_{s}^{\mathrm{Evlpe}}=\frac{1}{2}\log_2\left(\frac{p}{\sigma_A^2+\sigma_B^2}\right)-\chi+\epsilon_{\mathrm{uncrl}} $ & $C_{s}^{\mathrm{Evlpe}}\underset{|\rho|\rightarrow 1}{\geq} \frac{1}{2}\left[\log_2\left(\frac{p}{\sigma^2_A+\sigma^2_B}\right)-\log_2\left(\frac{p}{p(1-|\rho|^2)+\sigma_{*}^2+\sigma_E^2}\right)\right]+\epsilon_{\mathrm{crl}}  $ \\ \hline
		\end{tabular} 
	}
\vspace{0.5em}
	\caption{High SNR secret-key capacity of complex (CSI) versus envelope (RSS) sampling in both uncorrelated and correlated cases, under $\mathbf{(As1)}$-$\mathbf{(As3)}$. $\chi=0.69...$, $\sigma_{*}^2=\max(\sigma_A^2,\sigma_B^2)$, $\epsilon_{\mathrm{uncrl}}\rightarrow 0$, $\epsilon_{\mathrm{crl}}\rightarrow 0$ asymptotically.}
	\label{table:1}
\end{table*}

\subsection{Mutual Information between Alice/Bob and Eve}

We now analyze the mutual information between Alice and Eve and between Bob and Eve, which are given by
\begin{align}
I(\hat{R}_A;\hat{R}_E)&=h(\hat{R}_A)+h(\hat{R}_E)-h(\hat{R}_A,\hat{R}_E) \label{eq:def_I_RA_RE}\\
I(\hat{R}_B;\hat{R}_E)&=h(\hat{R}_B)+h(\hat{R}_E)-h(\hat{R}_B,\hat{R}_E).\nonumber
\end{align}
We already computed the values of $h(\hat{R}_A)$ and $h(\hat{R}_B)$. Similarly as for $\hat{R}_A$ and $\hat{R}_B$, we find that $\hat{R}_E\sim \text{Rayleigh}(\sqrt{\frac{p+\sigma_E^2}{2}})$ and \cite{michalowicz2013handbook}
\begin{align}
h(\hat{R}_E)&= \frac{1}{2}\log_2 \left(\frac{{p+\sigma_E^2}}{4}\right) + \frac{1}{2}\log_2 (e^{2+\gamma}).\label{eq:h_RE}
\end{align}
The following lemma gives the joint PDFs of $(\hat{R}_A,\hat{R}_E)$ and $(\hat{R}_B,\hat{R}_E)$.

\begin{lemma} \label{lemma:joint_dist_RA_RE}
	The joint PDF of $(\hat{R}_A,\hat{R}_E)$ is given by
	\begin{align*}
	f_{\hat{R}_A,\hat{R}_E}(\hat{r}_A,\hat{r}_E)=&\frac{4\hat{r}_A\hat{r}_E}{p^2(1-|\rho|^2)+p(\sigma_A^2+\sigma_E^2)+\sigma_A^2\sigma_E^2}
	I_0\left(\frac{2p|\rho|\hat{r}_A\hat{r}_E}{p^2(1-|\rho|^2)+p(\sigma_A^2+\sigma_E^2)+\sigma_A^2\sigma_E^2}\right)\\
	&\exp\left({-\frac{\hat{r}_A^2\left(p+\sigma_E^2\right)+\hat{r}_E^2\left(p+\sigma_A^2\right)}{p^2(1-|\rho|^2)+p(\sigma_A^2+\sigma_E^2)+\sigma_A^2\sigma_E^2}}\right).
	\end{align*}	
	The joint PDF $f_{\hat{R}_B,\hat{R}_E}(\hat{r}_B,\hat{r}_E)$ is similarly obtained, replacing subscripts $A$ by $B$.
\end{lemma}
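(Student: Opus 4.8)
The plan is to obtain the joint density by starting from the distribution of the complex pair $(\hat{H}_A,\hat{H}_E)$ and changing to polar coordinates. Since $\hat{H}_A=H+W_A$ and $\hat{H}_E=H_E+W_E$ with $(H,H_E)$ ZMCSCG and $W_A,W_E$ independent ZMCSCG, the pair $(\hat{H}_A,\hat{H}_E)$ is jointly ZMCSCG with the covariance matrix $\mat{C}_{\hat{H}_A\hat{H}_E}$ already displayed in Section~\ref{section:secret_key_capacity_complex}. First I would write its density $f_{\hat{H}_A,\hat{H}_E}(\hat{h}_A,\hat{h}_E)=\pi^{-2}|\mat{C}_{\hat{H}_A\hat{H}_E}|^{-1}\exp\!\left(-\mathbf{z}^{H}\mat{C}_{\hat{H}_A\hat{H}_E}^{-1}\mathbf{z}\right)$ with $\mathbf{z}=(\hat{h}_A,\hat{h}_E)^{T}$, and note that $|\mat{C}_{\hat{H}_A\hat{H}_E}|=(p+\sigma_A^2)(p+\sigma_E^2)-|\rho p|^2$ equals $D:=p^2(1-|\rho|^2)+p(\sigma_A^2+\sigma_E^2)+\sigma_A^2\sigma_E^2$, i.e., the common denominator in the statement. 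Expanding the Hermitian quadratic form with $\mat{C}_{\hat{H}_A\hat{H}_E}^{-1}=D^{-1}\bigl(\begin{smallmatrix} p+\sigma_E^2 & -\rho p\\ -\rho^{*} p & p+\sigma_A^2\end{smallmatrix}\bigr)$ then gives $\mathbf{z}^{H}\mat{C}_{\hat{H}_A\hat{H}_E}^{-1}\mathbf{z}=D^{-1}\!\left[(p+\sigma_E^2)|\hat{h}_A|^2+(p+\sigma_A^2)|\hat{h}_E|^2-2p\,\Re(\rho\,\hat{h}_A^{*}\hat{h}_E)\right]$.

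Next I would pass to polar coordinates $\hat{h}_A=\hat{r}_Ae^{\jmath\hat{\phi}_A}$ and $\hat{h}_E=\hat{r}_Ee^{\jmath\hat{\phi}_E}$, whose Jacobian is $\hat{r}_A\hat{r}_E$, and write $\rho=|\rho|e^{\jmath\psi_\rho}$ so that the cross term becomes $2p|\rho|\hat{r}_A\hat{r}_E\cos(\hat{\phi}_E-\hat{\phi}_A+\psi_\rho)$. The density of $(\hat{R}_A,\hat{R}_E)$ is then the marginal obtained by integrating the two phases over $[0,2\pi)^2$. The integrand depends on the phases only through $\psi:=\hat{\phi}_E-\hat{\phi}_A+\psi_\rho$; by $2\pi$-periodicity the integral over $\hat{\phi}_A$ contributes a factor $2\pi$, and the remaining angular integral is evaluated with the standard Bessel representation $\int_0^{2\pi}e^{a\cos\psi}\,d\psi=2\pi I_0(a)$ applied with $a=2p|\rho|\hat{r}_A\hat{r}_E/D$. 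Collecting the constant $(2\pi)^2/\pi^2=4$ yields exactly the claimed expression, and the joint PDF of $(\hat{R}_B,\hat{R}_E)$ follows by relabelling $A\to B$.

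There is no genuine obstacle — the result is the classical bivariate Rayleigh density — but the step demanding the most care is the phase integration: one must justify, via periodicity of the cosine, that the double angular integral factorises into the trivial $2\pi$ and the Bessel integral, and then keep track of the normalisation so that the prefactor emerges as $4\hat{r}_A\hat{r}_E/D$ rather than some other multiple. A minor supporting fact, which the authors defer to Appendix~\ref{subsection:proof_RA_RE}, is that the envelope and phase of each ZMCSCG are independent with the phase uniform on $[0,2\pi)$; this is what makes the polar change of variables and the angular integration go through cleanly.
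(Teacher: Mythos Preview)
Your proposal is correct and follows essentially the same route as the paper: write the ZMCSCG density of $(\hat{H}_A,\hat{H}_E)$, pass to polar coordinates with Jacobian $\hat{r}_A\hat{r}_E$, observe that the exponent depends on the phases only through their difference (shifted by $\angle\rho$), and reduce the double angular integral to a trivial $2\pi$ factor times $\int_0^{2\pi}e^{a\cos\psi}\,d\psi=2\pi I_0(a)$, with $|\mat{C}_{\hat{H}_A\hat{H}_E}|=D$ giving the stated denominator. The paper merely orders the two phase integrations slightly differently (first $\hat{\phi}_A$ to produce $I_0$, then $\hat{\phi}_E$ trivially), which is the same computation.
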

\begin{proof}
		The proof is given in Appendix~\ref{subsection:proof_RA_RE}.
\end{proof}
As for $h(\hat{R}_A,\hat{R}_B)$, it is difficult to find a closed-form expression of $h(\hat{R}_A,\hat{R}_E)$ and $h(\hat{R}_B,\hat{R}_E)$ due to the presence of the Bessel function. However, they can be evaluated numerically using the PDFs obtained in Lemma~\ref{lemma:joint_dist_RA_RE} so that $I(\hat{R}_A;\hat{R}_E)$ and $I(\hat{R}_B;\hat{R}_E)$ can be evaluated. Still, in specific regimes, closed-form solutions can be found.

In the low correlation regime, when $|\rho|\rightarrow 0$, it is easy to see that $f_{\hat{R}_A,\hat{R}_E}(\hat{r}_A,\hat{r}_E)$ converges to the product of two independent Rayleigh PDFs $f_{\hat{R}_A}(\hat{r}_A)f_{\hat{R}_E}(\hat{r}_E)$ and thus $h(\hat{R}_A,\hat{R}_E)=h(\hat{R}_A)+h(\hat{R}_E)$. As could be expected, we find that $I(\hat{R}_A;\hat{R}_E)=I(\hat{R}_B;\hat{R}_E)=0$ and the secret-key capacity is given by Theorem~\ref{theorem:I_RA_RB_high_SNR}.

In the high SNR and correlation regime, the following lemma shows the limiting behavior of the PDFs of $(\hat{R}_A,\hat{R}_E)$ and $(\hat{R}_B,\hat{R}_E)$, which can be used to obtain a simple closed-form expression of $I(\hat{R}_A;\hat{R}_E)$ and $I(\hat{R}_B;\hat{R}_E)$.

\begin{lemma} \label{lemma:joint_dist_RA_RE_high_SNR}
	Under $\mathbf{(As2)}$, as $|\rho|\rightarrow 1$, $\sigma_A^2\rightarrow 0$ and $\sigma_E^2\rightarrow 0$, the PDF $f_{\hat{R}_A,\hat{R}_E}(\hat{r}_A,\hat{r}_E)$ asymptotically converges to
	\begin{align*}
	f_{\hat{R}_A,\hat{R}_E}(\hat{r}_A,\hat{r}_E)&= \frac{2\hat{r}_Ee^{ -\frac{\hat{r}_E^2}{p} }}{p}\frac{ e^{ -\frac{(\hat{r}_A-|\rho|\hat{r}_E)^2}{p(1-|\rho|^2)+\sigma_A^2+\sigma_E^2} }}{\sqrt{\pi (p(1-|\rho|^2)+\sigma_A^2+\sigma_E^2)}}+O\left(\sqrt{1-|\rho|^2+\sigma_A^2}\right),
	\end{align*}
	which corresponds to the product of a Rayleigh and a normal distribution. The same results holds for $f_{\hat{R}_B,\hat{R}_E}(\hat{r}_B,\hat{r}_E)$, replacing subscripts $A$ by $B$, under $\mathbf{(As3)}$.
\end{lemma}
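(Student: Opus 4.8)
The plan is to push the exact joint PDF of Lemma~\ref{lemma:joint_dist_RA_RE} through a large-argument expansion of the Bessel function. Write $D \triangleq p^2(1-|\rho|^2)+p(\sigma_A^2+\sigma_E^2)+\sigma_A^2\sigma_E^2$ for the common denominator. Under $\mathbf{(As2)}$ with $|\rho|\to1$, $\sigma_A^2\to0$, $\sigma_E^2\to0$ we have $\sigma_E^2=\Theta(\sigma_A^2)$ and $D\to0$, so for every fixed $(\hat r_A,\hat r_E)$ with $\hat r_A,\hat r_E>0$ the argument $z\triangleq\frac{2p|\rho|\hat r_A\hat r_E}{D}$ of $I_0$ tends to $+\infty$. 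I would substitute the standard asymptotics $I_0(z)=\frac{e^{z}}{\sqrt{2\pi z}}\bigl(1+O(z^{-1})\bigr)$, which turns the Bessel factor into $\frac{\sqrt{D}\,e^{z}}{2\sqrt{\pi p|\rho|\hat r_A\hat r_E}}$ up to a multiplicative $1+O(D)$; after cancellation the polynomial prefactor of $f_{\hat R_A,\hat R_E}$ collapses to $\frac{2\sqrt{\hat r_A\hat r_E}}{\sqrt{\pi p|\rho|D}}$.

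Next I would collect the exponents. The combined exponent is $\frac1D\bigl(2p|\rho|\hat r_A\hat r_E-(p+\sigma_E^2)\hat r_A^2-(p+\sigma_A^2)\hat r_E^2\bigr)$, and completing the square in $\hat r_A$ — an exact identity that uses $p^2|\rho|^2-(p+\sigma_A^2)(p+\sigma_E^2)=-D$ — rewrites it as $-\frac{p+\sigma_E^2}{D}\bigl(\hat r_A-\frac{p|\rho|}{p+\sigma_E^2}\hat r_E\bigr)^2-\frac{\hat r_E^2}{p+\sigma_E^2}$. At this point the density is \emph{exactly} a Rayleigh-type factor $e^{-\hat r_E^2/(p+\sigma_E^2)}$ in $\hat r_E$ times a Gaussian in $\hat r_A$ centred at $\frac{p|\rho|}{p+\sigma_E^2}\hat r_E$ with variance of order $D$. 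The claimed limiting form is then obtained by replacing $D$ by $p\bigl(p(1-|\rho|^2)+\sigma_A^2+\sigma_E^2\bigr)$ (dropping the $\sigma_A^2\sigma_E^2$ term), $\frac{p|\rho|}{p+\sigma_E^2}$ by $|\rho|$, $\frac{1}{p+\sigma_E^2}$ by $\frac1p$, and $\sqrt{\hat r_A\hat r_E}$ by $\hat r_E$ in the prefactor; the $\sqrt{p|\rho|}$ in the prefactor goes to $\sqrt{p}$.

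The delicate step — and where I would spend most of the effort — is verifying that every one of these substitutions costs only an \emph{additive} error $O\bigl(\sqrt{1-|\rho|^2+\sigma_A^2}\bigr)$, locally uniformly in $(\hat r_A,\hat r_E)$. Since the leading-order density is bounded, a multiplicative relative error of that order (from $I_0$, from $z^{-1}=O(D)$, from $\sigma_A^2\sigma_E^2$, and from the $\sigma_E^2/p$ corrections in the exponent, using $\sigma_E^2=O(1-|\rho|^2+\sigma_A^2)$) propagates to an additive error of the same order. The only genuinely new input is replacing $\sqrt{\hat r_A\hat r_E}$ by $\hat r_E$: here I would exploit the Gaussian concentration, noting that on the $O(\sqrt D)$-window around $\hat r_A=\frac{p|\rho|}{p+\sigma_E^2}\hat r_E$ — outside of which both the true and the approximate density are super-polynomially small — one has $\hat r_A=|\rho|\hat r_E+O(\sqrt{p(1-|\rho|^2)+\sigma_A^2+\sigma_E^2})$ and hence $\sqrt{\hat r_A\hat r_E}=\hat r_E\bigl(1+O(\sqrt{1-|\rho|^2+\sigma_A^2})\bigr)$. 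Finally, the statement for $f_{\hat R_B,\hat R_E}$ follows verbatim upon swapping the subscripts $A\leftrightarrow B$ and invoking $\mathbf{(As3)}$ in place of $\mathbf{(As2)}$.
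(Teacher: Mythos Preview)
Your proposal is correct and follows essentially the same route as the paper: apply the large-argument expansion $I_0(z)=\frac{e^z}{\sqrt{2\pi z}}(1+O(z^{-1}))$ to the exact density of Lemma~\ref{lemma:joint_dist_RA_RE}, rearrange the resulting exponent into a Rayleigh factor times a narrow Gaussian in $\hat r_A$, and then control each of the remaining simplifications as additive $O\bigl(\sqrt{1-|\rho|^2+\sigma_A^2}\bigr)$ errors. The only cosmetic difference is that the paper rewrites the exponent via the exact identity
\[
\hat r_A^2(p+\sigma_E^2)+\hat r_E^2(p+\sigma_A^2)-2p|\rho|\hat r_A\hat r_E
= p(\hat r_A-|\rho|\hat r_E)^2+\hat r_A^2\sigma_E^2+\hat r_E^2\bigl(\sigma_A^2+p(1-|\rho|^2)\bigr),
\]
which lands directly on the target center $|\rho|\hat r_E$ at the price of an extra factor $e^{-(\hat r_A^2\sigma_E^2+\cdots)/D}$ to be absorbed, whereas you complete the square to center $\tfrac{p|\rho|}{p+\sigma_E^2}\hat r_E$ and then shift; both lead to the same bookkeeping, and your Gaussian-concentration argument for replacing $\sqrt{\hat r_A\hat r_E}$ by $\hat r_E$ is exactly the mechanism the paper uses (phrased there as ``$\hat r_A\approx|\rho|\hat r_E$'').
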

\begin{proof}
	The proof is given in Appendix~\ref{subsection:proof_RA_RE}.
\end{proof}

\begin{theorem} \label{theorem:I_RA_RE_high_SNR_correlation}
	Under $\mathbf{(As2)}$, as $|\rho|\rightarrow 1$, $\sigma_A^2\rightarrow 0$ and $\sigma_E^2\rightarrow 0$, the mutual information $I(\hat{R}_A;\hat{R}_E)$ converges to
	\begin{align*}
	I(\hat{R}_A;\hat{R}_E)\rightarrow &\frac{1}{2} \log_2
\left(\frac{p}{p(1-|\rho|^2)+\sigma_A^2+\sigma_E^2}\right)-\chi,
	\end{align*}
	where the constant penalty $\chi$ is defined in Theorem~\ref{theorem:I_RA_RB_high_SNR}. The mutual information $I(\hat{R}_B;\hat{R}_E)$ can be similarly approximated by replacing subscripts $A$ by $B$, under $\mathbf{(As3)}$.
\end{theorem}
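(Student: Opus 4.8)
The plan is to compute $I(\hat{R}_A;\hat{R}_E) = h(\hat{R}_A) + h(\hat{R}_E) - h(\hat{R}_A,\hat{R}_E)$ from its definition in (\ref{eq:def_I_RA_RE}), using the asymptotic form of the joint PDF supplied by Lemma~\ref{lemma:joint_dist_RA_RE_high_SNR}. The marginal entropies $h(\hat{R}_A)$ and $h(\hat{R}_E)$ are already known in closed form from (\ref{eq:h_RA}) and (\ref{eq:h_RE}); since $\sigma_A^2\to 0$ and $\sigma_E^2\to 0$ and both marginals are Rayleigh with parameters tending to $\sqrt{p/2}$, both converge to $\frac12\log_2(p/4)+\frac12\log_2(e^{2+\gamma})$. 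So the real work is to evaluate the asymptotic joint entropy $h(\hat{R}_A,\hat{R}_E)$.

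First I would substitute the limiting PDF from Lemma~\ref{lemma:joint_dist_RA_RE_high_SNR}, namely the product of a Rayleigh density in $\hat{r}_E$ with parameter $\sqrt{p/2}$ and a Gaussian density in $\hat{r}_A$ centered at $|\rho|\hat{r}_E$ with variance $v/2$, where $v \coloneqq p(1-|\rho|^2)+\sigma_A^2+\sigma_E^2$. Because this limiting density factors as $f_{\hat{R}_E}(\hat{r}_E)\,f_{\hat{R}_A|\hat{R}_E}(\hat{r}_A\mid\hat{r}_E)$, the joint differential entropy splits as $h(\hat{R}_A,\hat{R}_E) = h(\hat{R}_E) + h(\hat{R}_A\mid\hat{R}_E)$. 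The conditional density is Gaussian with a fixed variance $v/2$ independent of the conditioning value, so $h(\hat{R}_A\mid\hat{R}_E) = \frac12\log_2(2\pi e \cdot v/2) = \frac12\log_2(\pi e v)$ — no integration over $\hat{r}_E$ is actually needed here, which is the clean part. Then
\begin{align*}
I(\hat{R}_A;\hat{R}_E) &= h(\hat{R}_A) + h(\hat{R}_E) - h(\hat{R}_E) - h(\hat{R}_A\mid\hat{R}_E) \\
&= h(\hat{R}_A) - \tfrac12\log_2(\pi e v),
\end{align*}
and plugging in the limiting value of $h(\hat{R}_A)$ gives $\frac12\log_2(p/4) + \frac12\log_2(e^{2+\gamma}) - \frac12\log_2(\pi e v) = \frac12\log_2\!\big(\tfrac{p}{v}\big) + \frac12\log_2\!\big(\tfrac{e^{1+\gamma}}{4\pi}\big) = \frac12\log_2(p/v) - \chi$, which is exactly the claimed expression once $v$ is written out. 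The $I(\hat{R}_B;\hat{R}_E)$ statement follows verbatim by swapping $A\leftrightarrow B$ and invoking $\mathbf{(As3)}$ in place of $\mathbf{(As2)}$.

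The main obstacle is rigor in passing from the \emph{pointwise} convergence of the densities in Lemma~\ref{lemma:joint_dist_RA_RE_high_SNR} to convergence of the differential entropy, i.e., justifying that $h$ is continuous along this family of densities. I would handle this by exhibiting a dominating integrable function for $-f\log f$ uniformly in the parameters near the limit (using that the Gaussian factor has variance bounded away from $0$ and $\infty$ on the relevant range by $\mathbf{(As2)}$, and the Rayleigh factor has a fixed sub-Gaussian-type tail), so that dominated convergence applies; alternatively one can bound the entropy gap by a relative-entropy term between the exact PDF of Lemma~\ref{lemma:joint_dist_RA_RE} and its limit and show it is $o(1)$, consistent with the $O(\sqrt{1-|\rho|^2+\sigma_A^2})$ error term already recorded in the lemma. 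The remaining steps — evaluating the Gaussian and Rayleigh entropies and the final algebra collecting the constant into $\chi$ — are routine.
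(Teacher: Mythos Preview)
Your proposal is correct and follows essentially the same route as the paper: use the asymptotic factorization of Lemma~\ref{lemma:joint_dist_RA_RE_high_SNR} into a Rayleigh in $\hat{r}_E$ times a conditional Gaussian in $\hat{r}_A$, apply the chain rule $h(\hat{R}_A,\hat{R}_E)=h(\hat{R}_E)+h(\hat{R}_A\mid\hat{R}_E)$, insert the known Rayleigh and Gaussian entropies, and collect constants into $\chi$. For the passage from pointwise PDF convergence to entropy convergence, the paper invokes \cite[Th.~1]{Godavarti2004} (bounded second moments and bounded densities) rather than building a dominated-convergence argument directly, but the substance is the same; note, however, that your remark about the Gaussian variance being ``bounded away from $0$'' is not quite right in this regime since $v\to 0$, which is exactly why the paper also flags the degenerate endpoint and does not claim a quantitative rate.
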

\begin{proof}
	The proof is given in Appendix~\ref{subsection:proof_RA_RE}.
\end{proof}

Using the result of Theorem~\ref{theorem:I_RA_RE_high_SNR_correlation}, we can evaluate the lower bound on the secret-key capacity (\ref{eq:lower_bounds_R}) in the high SNR, high correlated regime, which is given in the right column of Table~\ref{table:1}. As compared with the complex case, the only difference is the \textit{pre-log factor} of 1/2 for envelope sampling. Note that the constant penalty $\chi$ has canceled since it is also present in $I(\hat{R}_A;\hat{R}_B)$. As for the complex case, the lower bound is not restricted to be positive, in which case it is useless. The condition (\ref{eq:lower_bound_N_Z}) for having a larger-than-zero lower bound, which was derived in the complex case, also applies here.

\subsection{Conditional Mutual Information between Alice and Bob}

As shown in (\ref{eq:def_joint_diff_entropy3}) in the complex case, to compute the conditional mutual information $I(\hat{R}_A;\hat{R}_B|\hat{R}_E)$, we need to evaluate the joint different entropy $h(\hat{R}_A,\hat{R}_B,\hat{R}_E)$. The following lemma gives the joint PDF of $(\hat{R}_A,\hat{R}_B,\hat{R}_E)$.

\begin{lemma} \label{lemma:joint_dist_RA_RB_RE}
	The joint PDF of $(\hat{R}_A,\hat{R}_B,\hat{R}_E)$ is given by
	\begin{align*}
	f_{\hat{R}_A,\hat{R}_B,\hat{R}_E}(\hat{r}_A,\hat{r}_B,\hat{r}_E)=&\frac{8\hat{r}_A\hat{r}_B\hat{r}_E}{|\mat{C}_{\hat{H}_A\hat{H}_B\hat{H}_E}|} G\left(\frac{2p(p(1-|\rho|^2)+\sigma_E^2)\hat{r}_A\hat{r}_B}{|\mat{C}_{\hat{H}_A\hat{H}_B\hat{H}_E}|},\frac{2|\rho| p\sigma_B^2\hat{r}_A\hat{r}_E}{|\mat{C}_{\hat{H}_A\hat{H}_B\hat{H}_E}|},\frac{2|\rho| p\sigma_A^2\hat{r}_B\hat{r}_E}{|\mat{C}_{\hat{H}_A\hat{H}_B\hat{H}_E}|}\right)\\
	&\exp \left( -\frac{\hat{r}_A^2|\mat{C}_{\hat{H}_B\hat{H}_E}|+\hat{r}_B^2|\mat{C}_{\hat{H}_A\hat{H}_E}|+\hat{r}_E^2|\mat{C}_{\hat{H}_A\hat{H}_B}|}{|\mat{C}_{\hat{H}_A\hat{H}_B\hat{H}_E}|} \right),\nonumber
	\end{align*}
	with the definition of the function $G(\alpha_1,\alpha_2,\alpha_3)$	 
	 \begin{align*}
	 G(\alpha_1,\alpha_2,\alpha_3)
	 =&\frac{1}{(2\pi)^2}\int_{0}^{2\pi}\int_{0}^{2\pi} \exp\left({\alpha_1 \cos(\phi_1)+\alpha_2 \cos(\phi_2)+\alpha_3\cos(\phi_2-\phi_1) }\right)   d\phi_1  d\phi_2. 
	 \end{align*}
\end{lemma}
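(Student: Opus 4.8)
The plan is to obtain the joint density of the three envelopes by integrating the joint density of the complex observations over the three phases. Since $(\hat{H}_A,\hat{H}_B,\hat{H}_E)$ is ZMCSCG with covariance $\mat{C}_{\hat{H}_A\hat{H}_B\hat{H}_E}$, its density at $\vect{h}=(\hat{h}_A,\hat{h}_B,\hat{h}_E)^T$ is $\pi^{-3}|\mat{C}_{\hat{H}_A\hat{H}_B\hat{H}_E}|^{-1}\exp(-\vect{h}^H\mat{C}_{\hat{H}_A\hat{H}_B\hat{H}_E}^{-1}\vect{h})$. First I would change to polar coordinates $\hat{h}_X=\hat{r}_Xe^{\jmath\hat{\phi}_X}$ for $X\in\{A,B,E\}$, whose Jacobian contributes a factor $\hat{r}_A\hat{r}_B\hat{r}_E$, and expand the Hermitian quadratic form as $\vect{h}^H\mat{M}\vect{h}=\sum_X M_{XX}\hat{r}_X^2+2\sum_{X<Y}\hat{r}_X\hat{r}_Y\,\Re\big(M_{XY}e^{\jmath(\hat{\phi}_Y-\hat{\phi}_X)}\big)$ with $\mat{M}=\mat{C}_{\hat{H}_A\hat{H}_B\hat{H}_E}^{-1}$; this depends on the phases only through the differences $\hat{\phi}_B-\hat{\phi}_A$ and $\hat{\phi}_E-\hat{\phi}_A$.

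Second, I would evaluate the entries of $\mat{M}$ through the adjugate formula. A short cofactor computation gives the diagonal entries $M_{AA}=|\mat{C}_{\hat{H}_B\hat{H}_E}|/|\mat{C}_{\hat{H}_A\hat{H}_B\hat{H}_E}|$, and analogously $M_{BB}=|\mat{C}_{\hat{H}_A\hat{H}_E}|/|\mat{C}_{\hat{H}_A\hat{H}_B\hat{H}_E}|$ and $M_{EE}=|\mat{C}_{\hat{H}_A\hat{H}_B}|/|\mat{C}_{\hat{H}_A\hat{H}_B\hat{H}_E}|$, which reproduce the exponential prefactor of the claimed PDF. The off-diagonal entries are $M_{AB}=-p\big(p(1-|\rho|^2)+\sigma_E^2\big)/|\mat{C}_{\hat{H}_A\hat{H}_B\hat{H}_E}|$ (real and negative), $M_{AE}=-\rho p\sigma_B^2/|\mat{C}_{\hat{H}_A\hat{H}_B\hat{H}_E}|$ and $M_{BE}=-\rho p\sigma_A^2/|\mat{C}_{\hat{H}_A\hat{H}_B\hat{H}_E}|$. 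Substituting these into $-\vect{h}^H\mat{M}\vect{h}$ and writing $\rho=|\rho|e^{\jmath\theta}$, the three cross terms become $\alpha_1\cos(\hat{\phi}_B-\hat{\phi}_A)+\alpha_2\cos(\hat{\phi}_E-\hat{\phi}_A+\theta)+\alpha_3\cos(\hat{\phi}_E-\hat{\phi}_B+\theta)$, where $\alpha_1,\alpha_2,\alpha_3$ are exactly the arguments fed to $G$ in the statement.

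Third, I would carry out the phase integrals over $[0,2\pi]^3$. Because the integrand is invariant under a common shift of all three phases, one integration is free and gives a factor $2\pi$; the measure-preserving substitution $\phi_1=\hat{\phi}_B-\hat{\phi}_A$, $\phi_2=\hat{\phi}_E-\hat{\phi}_A+\theta$, combined with $2\pi$-periodicity and the identity $\hat{\phi}_E-\hat{\phi}_B+\theta=\phi_2-\phi_1$, reduces the remaining double integral to $(2\pi)^2\,G(\alpha_1,\alpha_2,\alpha_3)$. Collecting constants, $\pi^{-3}\cdot 2\pi\cdot(2\pi)^2=8$, which together with the Jacobian $\hat{r}_A\hat{r}_B\hat{r}_E$ and the exponential factor identified above yields the announced expression. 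I expect the main obstacle to be purely bookkeeping: tracking the phase offset $\theta=\arg\rho$ and the cofactor signs so that the three cosine arguments line up as $\phi_1$, $\phi_2$ and $\phi_2-\phi_1$, and checking that the diagonal cofactors coincide with the $2\times 2$ determinants $|\mat{C}_{\hat{H}_B\hat{H}_E}|$, $|\mat{C}_{\hat{H}_A\hat{H}_E}|$, $|\mat{C}_{\hat{H}_A\hat{H}_B}|$. The same computation specialized to two complex Gaussians recovers Lemmas~\ref{lemma:joint_dist_RA_RB} and \ref{lemma:joint_dist_RA_RE}, with the double integral $G$ replaced by a single one equal to $I_0$.
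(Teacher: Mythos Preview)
Your proposal is correct and follows essentially the same approach as the paper: write the ZMCSCG density of $(\hat{H}_A,\hat{H}_B,\hat{H}_E)$, pass to polar coordinates, observe that the exponent depends on the phases only through their pairwise differences (so the phase of $\rho$ can be absorbed and one of the three phase integrals is free), and reduce the remaining double integral to the function $G$. The paper's appendix is terser---it simply states the polar-coordinate density and notes the change of variables $\phi_1=\hat{\phi}_A-\hat{\phi}_B$, $\phi_2=\hat{\phi}_A-\hat{\phi}_E$---whereas you spell out the cofactor computation for $\mat{C}_{\hat{H}_A\hat{H}_B\hat{H}_E}^{-1}$ explicitly; the two arguments are otherwise identical.
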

\begin{proof}
	The proof is given in Appendix~\ref{subsection:proof_RA_RB_RE}.
\end{proof}

Here again, computing an analytical expression of the joint differential entropy of $(\hat{R}_A,\hat{R}_B,\hat{R}_E)$ is intricate. However, it can be evaluated numerically, so that $I(\hat{R}_A;\hat{R}_B|\hat{R}_E)$ and thus (\ref{eq:upper_bounds_R}) can be computed.

\section{Numerical Validation}
\label{section:Numerical_validation}


This section aims at numerically validating the analytical results presented in previous sections. The following figures plot the lower bound (LB) and the upper bound (UB) on $C_{s}^{\mathrm{Cplex}}$ and $C_{s}^{\mathrm{Evlpe}}$. We will show many cases where the bounds become tight, as foreseen by the results of previous section. The mutual information quantities $I(\hat{H}_A,\hat{H}_B)$ and $I(\hat{R}_A,\hat{R}_B)$ are also plotted for comparison, as they correspond to the secret-key capacity in the case of uncorrelated observations at Eve, \textit{i.e.}, $C_{s}^{\mathrm{Cplex}}=I(\hat{H}_A,\hat{H}_B)$ and $C_{s}^{\mathrm{Evlpe}}=I(\hat{R}_A,\hat{R}_B)$ for $\rho=0$. They can also be seen as another UB, looser than $I(\hat{H}_A,\hat{H}_B|\hat{H}_E)$ and $I(\hat{R}_A,\hat{R}_B|\hat{R}_E)$.

\subsection{Impact of SNR}
\begin{figure}[!t]
	\centering
	\resizebox{0.6\textwidth}{!}{%
		\Large
%
%
\begin{tikzpicture}

\begin{axis}[%
width=4.520833in,
height=3.565625in,
at={(0.758333in,0.48125in)},
scale only axis,
xmin=0,
xmax=30,
xlabel={SNR [dB]},
ymin=-2,
ymax=10,
ylabel={Secret-key capacity [bits/sample]},
title={$\text{SNR}=p/\sigma_A^2=p/\sigma_B^2=p/\sigma_E^2$, $\rho=0.9$},
legend style={at={(0.03,0.97)},anchor=north west,legend cell align=left,align=left,draw=white!15!black}
]
\addplot [color=blue,solid,line width=1.5pt]
  table[row sep=crcr]{%
0	0.415037499278844\\
3.33333333333333	0.906358316888832\\
6.66666666666667	1.62998880422155\\
10	2.52654581449583\\
13.3333333333333	3.52705079218792\\
16.6666666666667	4.58259264258718\\
20	5.66537127432466\\
23.3333333333333	6.76118304368848\\
26.6666666666667	7.86313135771873\\
30	8.96794706576646\\
};
\addlegendentry{\normalsize $I(\hat{H}_A,\hat{H}_B)$};

\addplot [color=blue,dotted,line width=1.5pt]
  table[row sep=crcr]{%
0	-1\\
3.33333333333333	0.107309364962454\\
6.66666666666667	1.21461872992491\\
10	2.32192809488736\\
13.3333333333333	3.42923745984982\\
16.6666666666667	4.53654682481227\\
20	5.64385618977472\\
23.3333333333333	6.75116555473718\\
26.6666666666667	7.85847491969963\\
30	8.96578428466209\\
};
\addlegendentry{\normalsize $I(\hat{H}_A,\hat{H}_B)$ high SNR};

\addplot [color=blue,solid,line width=1.5pt,mark size=4.0pt,mark=triangle,mark options={solid,rotate=270}]
  table[row sep=crcr]{%
0	0.0885939232689883\\
3.33333333333333	0.221714508729963\\
6.66666666666667	0.483436735043058\\
10	0.929610672108601\\
13.3333333333333	1.58510211282168\\
16.6666666666667	2.42290994411779\\
20	3.38580875549412\\
23.3333333333333	4.42086903203331\\
26.6666666666667	5.49337978273891\\
30	6.58424884749482\\
};
\addlegendentry{\normalsize \normalsize LB on $C_{s}^{\mathrm{Cplex}}$};

\addplot [color=blue,solid,line width=1.5pt,mark size=4.0pt,mark=triangle,mark options={solid,rotate=90}]
  table[row sep=crcr]{%
0	0.216181978153841\\
3.33333333333333	0.396978311606667\\
6.66666666666667	0.664654055693846\\
10	1.07590011435926\\
13.3333333333333	1.68124085751111\\
16.6666666666667	2.47724460443986\\
20	3.41374938226708\\
23.3333333333333	4.43450931404763\\
26.6666666666667	5.49986565175381\\
30	6.58729370655497\\
};
\addlegendentry{\normalsize UB on $C_{s}^{\mathrm{Cplex}}$};

\addplot [color=black!20!red,solid,line width=1.5pt]
  table[row sep=crcr]{%
0	0.199449475543038\\
3.33333333333333	0.189337779325221\\
6.66666666666667	0.393901516307943\\
10	0.734767241559467\\
13.3333333333333	1.16700889964794\\
16.6666666666667	1.65480446687061\\
20	2.17374340234317\\
23.3333333333333	2.70930093459276\\
26.6666666666667	3.25339929862963\\
30	3.81199753153044\\
};
\addlegendentry{\normalsize $I(\hat{R}_A,\hat{R}_B)$};

\addplot [color=black!20!red,dotted,line width=1.5pt]
  table[row sep=crcr]{%
0	-1.18803154201592\\
3.33333333333333	-0.634376859534691\\
6.66666666666667	-0.0807221770534643\\
10	0.472932505427763\\
13.3333333333333	1.02658718790899\\
16.6666666666667	1.58024187039022\\
20	2.13389655287144\\
23.3333333333333	2.68755123535267\\
26.6666666666667	3.2412059178339\\
30	3.79486060031513\\
};
\addlegendentry{\normalsize $I(\hat{R}_A,\hat{R}_B)$ high SNR};

\addplot [color=black!20!red,solid,line width=1.5pt,mark size=4.0pt,mark=triangle,mark options={solid,rotate=270}]
  table[row sep=crcr]{%
0	0.0123619926713712\\
3.33333333333333	0.0544359641474279\\
6.66666666666667	0.15863829530547\\
10	0.357203054293795\\
13.3333333333333	0.6651106264622\\
16.6666666666667	1.06951023672823\\
20	1.54117069772795\\
23.3333333333333	2.0524330085596\\
26.6666666666667	2.58468149640648\\
30	3.13764836548228\\
};
\addlegendentry{\normalsize LB on $C_{s}^{\mathrm{Evlpe}}$};

\addplot [color=black!20!red,solid,line width=1.5pt,mark size=4.0pt,mark=triangle,mark options={solid,rotate=90}]
  table[row sep=crcr]{%
0	-0.0140829873654735\\
3.33333333333333	0.0778899554752983\\
6.66666666666667	0.204911974300432\\
10	0.402722728504849\\
13.3333333333333	0.696624566567671\\
16.6666666666667	1.08617909370072\\
20	1.54755621656236\\
23.3333333333333	2.05171653440364\\
26.6666666666667	2.58006446773402\\
30	3.13764836548228\\
};
\addlegendentry{\normalsize UB on $C_{s}^{\mathrm{Evlpe}}$};

\draw [<->, line width=2] (axis cs:29,6.5) -- (axis cs:29,8.4);
\node[below right, align=left] at (axis cs:24.5,7.7) {\Large $\approx 2.4$};

\draw [<->, line width=2] (axis cs:29,3.0) -- (axis cs:29,3.6);
\node[below right, align=left] at (axis cs:24.75,2.5) {\Large $\approx 0.67$};

\end{axis}
\end{tikzpicture}%
	}
	\caption{Secret-key capacity for complex channel sampling versus envelope sampling as a function of SNR.}
	\label{fig:comparison_capacity_SNR}
	\vspace{-1em}
\end{figure}
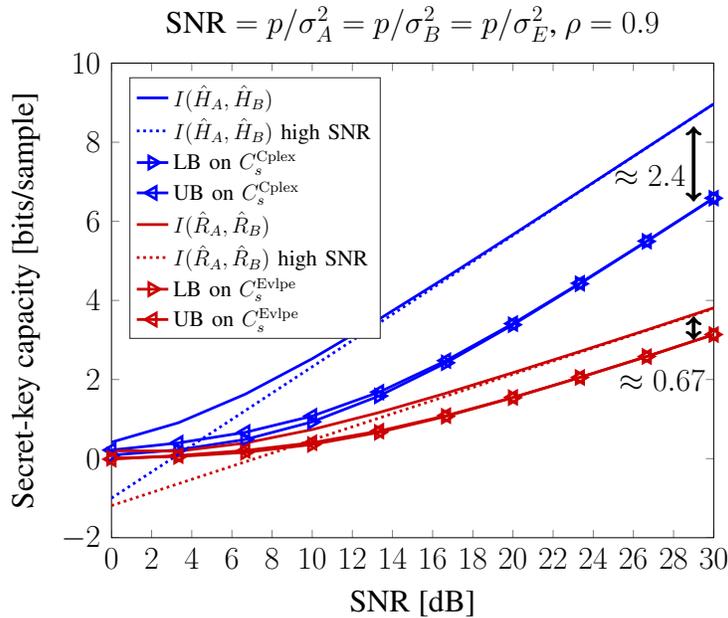

In Fig.~\ref{fig:comparison_capacity_SNR}, the impact of the SNR on $C_{s}^{\mathrm{Cplex}}$ and $C_{s}^{\mathrm{Evlpe}}$ is studied. The SNR is defined as $\text{SNR}=p/\sigma_A^2=p/\sigma_B^2=p/\sigma_E^2$. A first observation is the large performance gain of complex sampling versus envelope sampling. 

Focusing first on the uncorrelated case ($I(\hat{H}_A,\hat{H}_B)$ and $I(\hat{R}_A,\hat{R}_B)$), two penalties of envelope sampling in the high SNR regimen were identified in Table~\ref{table:1}: i) a \textit{pre-log factor} of $1/2$ inducing a smaller slope as a function of SNR and ii) a constant penalty of $\chi$ bit, inducing a translation of the curve downwards of about $0.69$ bit.

In the correlated case ($\rho=0.9$), $C_{s}^{\mathrm{Cplex}}$ and $C_{s}^{\mathrm{Evlpe}}$ are reduced given the knowledge Eve has gained from her channel observations. As foreseen by Prop.~\ref{proposition:tight_high_SNR}, the bounds on $C_{s}^{\mathrm{Cplex}}$ become tight as the SNR grows large and a constant penalty of $\log_2(1-|\rho|^2)\approx -2.4$ bits is observed as compared to the uncorrelated case. Interestingly, the bounds become tight for $C_{s}^{\mathrm{Evlpe}}$, even for smaller values of SNR. The gap as compared to the uncorrelated case can be approximated from Table~\ref{table:1} as $\frac{1}{2}\log_2(1-|\rho|^2)+\chi \approx -0.51$ bits. The inaccuracy with the simulated gap of $-0.67$ bit comes from the fact that the LB on $C_{s}^{\mathrm{Evlpe}}$ in Table~\ref{table:1} only asymptotically holds for $|\rho|\rightarrow 1$.

\subsection{Impact of Correlation}

\begin{figure}[!t]
	\centering
	\resizebox{0.6\textwidth}{!}{%
		\Large
		\input{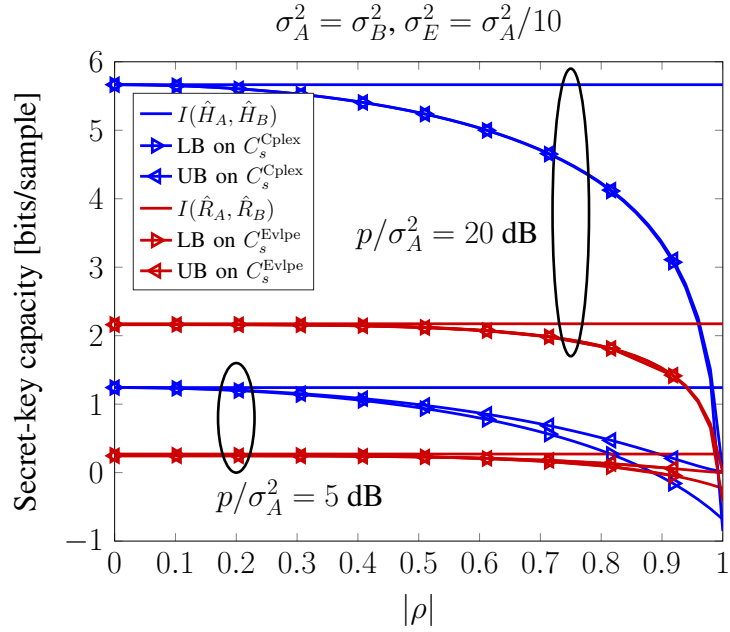}
	}
	\caption{Secret-key capacity for complex channel sampling versus envelope sampling as a function of correlation coefficient magnitude $|\rho|$.}
	\label{fig:comparison_capacity_rho}
	\vspace{-1em}
\end{figure}

In Fig.~\ref{fig:comparison_capacity_rho}, the impact of the correlation coefficient magnitude $|\rho|$ is studied\footnote{From previous analytical studies, it was shown that $C_{s}^{\mathrm{Cplex}}$ and $C_{s}^{\mathrm{Evlpe}}$ only depend on the magnitude of the correlation coefficient and not on its phase.}, for two SNR regimes. We here consider an identical noise variance at Alice and Bob, while Eve uses a more powerful receiver so that $\sigma_A^2=\sigma_B^2$ and $\sigma_E^2=\sigma_A^2/10$.

One can see that, as $|\rho|\rightarrow 0$, the LB and UB become tight and converge to the mutual information between Alice and Bob observations. For larger values of $|\rho|$, bounds are less tight, especially in the complex case. As foreseen by Prop.~\ref{proposition:tight_high_SNR}, for a same value of $|\rho|<1$, the LB and UB become tight for large SNR values. As already discussed in the context of equation (\ref{eq:lower_bound_N_Z}), the LBs on the secret-key capacity are not restricted to be positive. This case is observed in Fig.~\ref{fig:comparison_capacity_rho} for large values of $|\rho|$. Note that this case arises here given the reduced noise power at Eve $\sigma_E^2=\sigma_A^2/10$. In practice, the secret-key capacity cannot be lower than zero. We chose not to put negative values of the LB to zero, as it provides some physical insights on the problem.

\subsection{Impact of Different Noise Variances at Alice and Bob}

\begin{figure}[!t]
	\centering
	\resizebox{0.6\textwidth}{!}{%
		\Large
%
%
\begin{tikzpicture}

\begin{axis}[%
width=4.520833in,
height=3.496181in,
at={(0.758333in,0.550694in)},
scale only axis,
xmin=0,
xmax=30,
xlabel={$p/\sigma_A^2$ [dB]},
ymin=0,
ymax=7,
ylabel={Secret-key capacity [bits/sample]},
title={$\sigma_B^2=\sigma_E^2$, $\rho = 0.6$},
legend style={at={(0.03,0.97)},anchor=north west,legend cell align=left,align=left,draw=white!15!black}
]
\addplot [color=blue,solid,line width=1.5pt]
  table[row sep=crcr]{%
0	0.689365467760937\\
6	1.34820709012628\\
12	1.80922197452356\\
18	1.98789541882185\\
24	2.03939525512655\\
30	2.0528227478971\\
};
\addlegendentry{\normalsize $I(\hat{H}_A,\hat{H}_B)$};

\addplot [color=blue,solid,line width=1.5pt,mark=triangle,mark size=4.0pt,mark options={solid,rotate=270}]
  table[row sep=crcr]{%
0	0.477208370283574\\
6	1.01214817404052\\
12	1.4731630584378\\
18	1.65183650273609\\
24	1.7033363390408\\
30	1.71676383181134\\
};
\addlegendentry{\normalsize LB on $C_{s}^{\mathrm{Cplex}}$};

\addplot [color=blue,solid,line width=1.5pt,mark=triangle,mark size=4.0pt,mark options={solid,rotate=90}]
  table[row sep=crcr]{%
0	0.500577907672661\\
6	1.05206355851738\\
12	1.47815447500029\\
18	1.65223265186438\\
24	1.70336297135418\\
30	1.71676553991671\\
};
\addlegendentry{\normalsize UB on $C_{s}^{\mathrm{Cplex}}$};

\addplot [color=black!20!red,solid,line width=1.5pt]
  table[row sep=crcr]{%
0	0.186081465998807\\
6	0.302819136912714\\
12	0.459150378058709\\
18	0.525646117501578\\
24	0.545279543232846\\
30	0.550430426107978\\
};
\addlegendentry{\normalsize $I(\hat{R}_A,\hat{R}_B)$};

\addplot [color=black!20!red,solid,line width=1.5pt,mark=triangle,mark size=4.0pt,mark options={solid,rotate=270}]
  table[row sep=crcr]{%
0	0.133371239922934\\
6	0.250108910836841\\
12	0.406440151982836\\
18	0.472935891425705\\
24	0.492569317156973\\
30	0.497720200032105\\
};
\addlegendentry{\normalsize LB on $C_{s}^{\mathrm{Evlpe}}$};

\addplot [color=black!20!red,solid,line width=1.5pt,mark=triangle,mark size=4.0pt,mark options={solid,rotate=90}]
  table[row sep=crcr]{%
0	0.0871415069205961\\
6	0.249708822949706\\
12	0.40484706929191\\
18	0.471063583396845\\
24	0.490718545510293\\
30	0.495883810094909\\
};
\addlegendentry{\normalsize UB on $C_{s}^{\mathrm{Evlpe}}$};

\addplot [color=blue,solid,line width=1.5pt,forget plot]
  table[row sep=crcr]{%
0	0.985786140780299\\
6	2.26068224202841\\
12	3.86429598087437\\
18	5.30197399947216\\
24	6.1763664027164\\
30	6.52083898802342\\
};
\addplot [color=blue,solid,line width=1.5pt,mark=triangle,mark size=4.0pt,mark options={solid,rotate=270},forget plot]
  table[row sep=crcr]{%
0	0.702614089375429\\
6	1.77694429961467\\
12	3.27511279905157\\
18	4.6785429915735\\
24	5.54841231994551\\
30	5.89288490525254\\
};
\addplot [color=blue,solid,line width=1.5pt,mark=triangle,mark size=4.0pt,mark options={solid,rotate=90},forget plot]
  table[row sep=crcr]{%
0	0.702691229217582\\
6	1.77724270126324\\
12	3.27618038595605\\
18	4.68157147895157\\
24	5.54931025561312\\
30	5.89295736769487\\
};
\addplot [color=black!20!red,solid,line width=1.5pt,forget plot]
  table[row sep=crcr]{%
0	0.253255980267058\\
6	0.629548768577237\\
12	1.3202698362144\\
18	1.99811633361507\\
24	2.42255093996874\\
30	2.59124014411846\\
};
\addplot [color=black!20!red,solid,line width=1.5pt,mark=triangle,mark size=4.0pt,mark options={solid,rotate=270},forget plot]
  table[row sep=crcr]{%
0	0.162428329750571\\
6	0.562784489026788\\
12	1.2371299527085\\
18	1.90821742769852\\
24	2.33172328945225\\
30	2.50041249360197\\
};
\addplot [color=black!20!red,solid,line width=1.5pt,mark=triangle,mark size=4.0pt,mark options={solid,rotate=90},forget plot]
  table[row sep=crcr]{%
0	0.156697605939694\\
6	0.559942244324855\\
12	1.23352722442023\\
18	1.90460939405087\\
24	2.32719874415703\\
30	2.49505227432741\\
};

\node[below right, align=left, draw=white]
at (axis cs:20.5,1.5) {\Large $\frac{p}{\sigma_B^2}=5$ dB};
\draw [black,line width=0.5mm] (axis cs:28,1.2) ellipse [x radius=3, y radius=100];

\node[below right, align=left, draw=white]
at (axis cs:17.5,4.35) {\Large $\frac{p}{\sigma_B^2}=20$ dB};
\draw [black,line width=0.5mm] (axis cs:26,4.35) ellipse [x radius=3, y radius=220];

\end{axis}
\end{tikzpicture}%
	}
	\caption{Impact of a different noise variance at Alice and Bob.}
	\label{fig:secret_key_capacity_SNR_Alice}
	\vspace{-1em}
\end{figure}
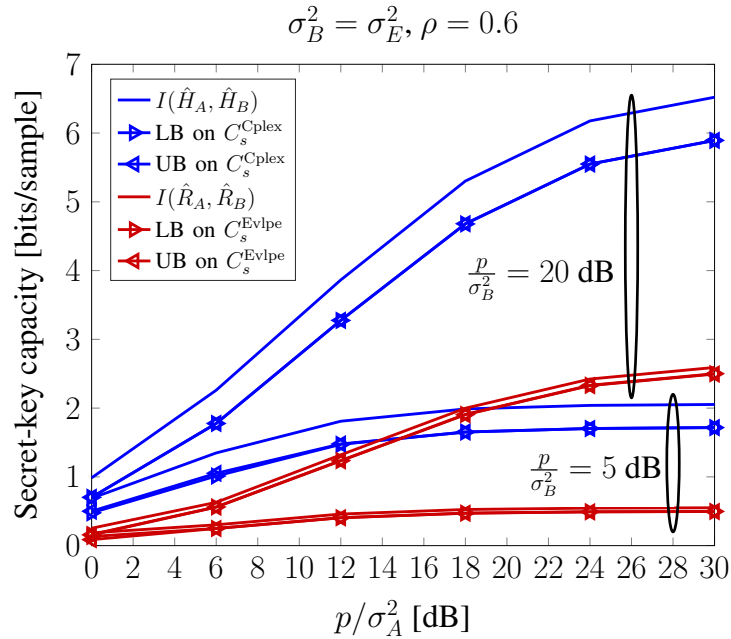

In Fig.~\ref{fig:secret_key_capacity_SNR_Alice}, the impact of a different noise variance at Alice and Bob is studied. More specifically, the SNRs at Bob and Eve are kept identical, \textit{i.e.}, $p/\sigma_B^2=p/\sigma_E^2$, for two SNR regimes (5 dB and 20 dB). On the other hand, the SNR at Alice $p/\sigma_A^2$ is varied from 0 to 30 dB. The correlation coefficient is set to $\rho=0.6$.

As foreseen in Sections~\ref{section:secret_key_capacity_complex} and \ref{section:secret_key_capacity_envelope}, the LB and UB bounds become tight as $\sigma_A^2\rightarrow 0$ for a fixed value of $\sigma_B^2$. Moreover, as $p/\sigma_A^2$ grows large, $C_{s}^{\mathrm{Cplex}}$ and $C_{s}^{\mathrm{Evlpe}}$ saturate at a plateau. This can be explained by the fact that they enter a regime limited by the fixed noise variance at Bob $\sigma_B^2$.

\subsection{Impact of Different Noise Variance at Eve}

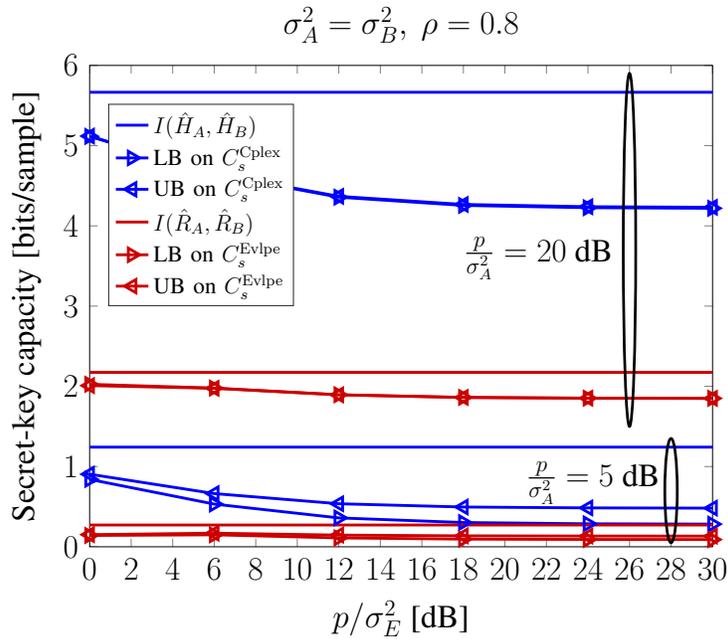
\begin{figure}[!t]
	\centering
	\resizebox{0.6\textwidth}{!}{%
		\Large
%
%
\begin{tikzpicture}

\begin{axis}[%
width=4.520833in,
height=3.496181in,
at={(0.758333in,0.550694in)},
scale only axis,
xmin=0,
xmax=30,
xlabel={$p/\sigma_E^2$ [dB]},
ymin=0,
ymax=6,
ylabel={Secret-key capacity [bits/sample]},
title={$\sigma_A^2=\sigma_B^2,\ \rho=0.8$},
legend style={at={(0.03,0.92)},anchor=north west,legend cell align=left,align=left,draw=white!15!black}
]
\addplot [color=blue,solid,line width=1.5pt]
  table[row sep=crcr]{%
0	1.24200524220002\\
6	1.24200524220002\\
12	1.24200524220002\\
18	1.24200524220002\\
24	1.24200524220002\\
30	1.24200524220002\\
};
\addlegendentry{\normalsize $I(\hat{H}_A,\hat{H}_B)$};

\addplot [color=blue,solid,line width=1.5pt,mark=triangle,mark size=4.0pt,mark options={solid,rotate=270}]
  table[row sep=crcr]{%
0	0.840143607295123\\
6	0.532144426189527\\
12	0.360021106645643\\
18	0.302324085799498\\
24	0.28658132796321\\
30	0.282540637485293\\
};
\addlegendentry{\normalsize LB on $C_{s}^{\mathrm{Cplex}}$};

\addplot [color=blue,solid,line width=1.5pt,mark=triangle,mark size=4.0pt,mark options={solid,rotate=90}]
  table[row sep=crcr]{%
0	0.904840904039149\\
6	0.663114680995913\\
12	0.536401782653026\\
18	0.495539887526251\\
24	0.484544154442089\\
30	0.4817328030039\\
};
\addlegendentry{\normalsize UB on $C_{s}^{\mathrm{Cplex}}$};

\addplot [color=black!20!red,solid,line width=1.5pt]
  table[row sep=crcr]{%
0	0.271328798383817\\
6	0.271328798383817\\
12	0.271328798383817\\
18	0.271328798383817\\
24	0.271328798383817\\
30	0.271328798383817\\
};
\addlegendentry{\normalsize $I(\hat{R}_A,\hat{R}_B)$};

\addplot [color=black!20!red,solid,line width=1.5pt,mark=triangle,mark size=4.0pt,mark options={solid,rotate=270}]
  table[row sep=crcr]{%
0	0.141012770968452\\
6	0.149089071249937\\
12	0.110506046263416\\
18	0.0955260323431142\\
24	0.0913098460768087\\
30	0.0902194351240482\\
};
\addlegendentry{\normalsize LB on $C_{s}^{\mathrm{Evlpe}}$};

\addplot [color=black!20!red,solid,line width=1.5pt,mark=triangle,mark size=4.0pt,mark options={solid,rotate=90}]
  table[row sep=crcr]{%
0	0.151282672950074\\
6	0.169558560328479\\
12	0.145624874167599\\
18	0.136409030833023\\
24	0.133839635671161\\
30	0.133177077468931\\
};
\addlegendentry{\normalsize UB on $C_{s}^{\mathrm{Evlpe}}$};

\addplot [color=blue,solid,line width=1.5pt,forget plot]
  table[row sep=crcr]{%
0	5.66537127432466\\
6	5.66537127432466\\
12	5.66537127432466\\
18	5.66537127432466\\
24	5.66537127432466\\
30	5.66537127432466\\
};
\addplot [color=blue,solid,line width=1.5pt,mark=triangle,mark size=4.0pt,mark options={solid,rotate=270},forget plot]
  table[row sep=crcr]{%
0	5.11568424835103\\
6	4.64664204767288\\
12	4.35760259564334\\
18	4.25503059776644\\
24	4.22647463764185\\
30	4.21910398526711\\
};
\addplot [color=blue,solid,line width=1.5pt,mark=triangle,mark size=4.0pt,mark options={solid,rotate=90},forget plot]
  table[row sep=crcr]{%
0	5.11901683059581\\
6	4.65402607466342\\
12	4.36823279299854\\
18	4.266980385253\\
24	4.23880903097253\\
30	4.2315389093517\\
};
\addplot [color=black!20!red,solid,line width=1.5pt,forget plot]
  table[row sep=crcr]{%
0	2.17374340234317\\
6	2.17374340234317\\
12	2.17374340234317\\
18	2.17374340234317\\
24	2.17374340234317\\
30	2.17374340234317\\
};
\addplot [color=black!20!red,solid,line width=1.5pt,mark=triangle,mark size=4.0pt,mark options={solid,rotate=270},forget plot]
  table[row sep=crcr]{%
0	2.02526401830527\\
6	1.978455334312\\
12	1.8950027289145\\
18	1.86200813582397\\
24	1.85258263720966\\
30	1.8501339453882\\
};
\addplot [color=black!20!red,solid,line width=1.5pt,mark=triangle,mark size=4.0pt,mark options={solid,rotate=90},forget plot]
  table[row sep=crcr]{%
0	2.00892753043911\\
6	1.97553329619109\\
12	1.89382403693293\\
18	1.86140205501943\\
24	1.8521399241662\\
30	1.84973378205917\\
};

\node[below right, align=left]
at (axis cs:20.5,1.2) {\Large $\frac{p}{\sigma_A^2}=5$ dB};
\draw [black,line width=0.5mm] (axis cs:28,0.7) ellipse [x radius=3, y radius=65];

\node[below right, align=left]
at (axis cs:17.5,4) {\Large $\frac{p}{\sigma_A^2}=20$ dB};
\draw [black,line width=0.5mm] (axis cs:26,3.7) ellipse [x radius=3, y radius=220];

\end{axis}
\end{tikzpicture}%
	}
	\caption{Impact of a different noise variance at Eve.}
	\label{fig:secret_key_capacity_SNR_Eve}
	\vspace{-1em}
\end{figure}

In Fig.~\ref{fig:secret_key_capacity_SNR_Eve}, the impact of a different noise variance at Eve is studied. More specifically, the SNRs at Alice and Bob are kept identical, \textit{i.e.}, $p/\sigma_A^2=p/\sigma_B^2$, for two SNR regimes (5 dB and 20 dB). On the other hand, the SNR at Eve $p/\sigma_E^2$ is varied from 0 to 30 dB. The correlation coefficient is set to $\rho=0.8$.

According to Prop.~\ref{proposition:tight_high_SNR}, the LB and UB are tighter in the higher SNR regime. Moreover, as $p/\sigma_E^2$ grows large, $C_{s}^{\mathrm{Cplex}}$ and $C_{s}^{\mathrm{Evlpe}}$ decrease up to a certain floor. This can be explained by the fact that Eve performance is not limited by $\sigma_E^2$ but by the fixed value of the correlation coefficient $\rho$.

\section{Conclusions}
\label{section:conclusion}

In this paper, we have compared the secret-key capacity based on the sampling process of the entire CSI or only its envelope or RSS, taking into account correlation of Eve's observations. We have evaluated lower and upper bounds on the secret-key capacity. In the complex case, we obtain simple closed-form expressions. In the envelope case, the bounds must be evaluated numerically. In a number of particular cases, the lower and upper bounds become tight: low correlation of the eavesdropper, relatively smaller noise variance at Bob than Alice (or vice versa) and specific high SNR regimes. Finally, we have shown that, in the high SNR regime, the bounds can be evaluated in closed-form and result in simple expressions, which highlight the gain of CSI-based systems. The penalty of envelope-based versus complex-based secret-key generation is: i) a \textit{pre-log} factor of $1/2$ instead of $1$, implying a slower slope of the secret-key capacity as a function of SNR and ii) a constant penalty of about $0.69$ bit, which disappears as Eve's channel gets highly correlated.

\section{Appendix}
\label{section:appendix}

\subsection{Upper Bound of Complex Sampling-based Secret-Key Capacity} \label{subsection:proof1}
We need to show that $I(\hat{H}_A;\hat{H}_B|\hat{H}_E)\leq I(\hat{H}_A;\hat{H}_B)$, which is equivalent to showing that
\begin{align*}
0 &\geq I(\hat{H}_A;\hat{H}_B|\hat{H}_E)- I(\hat{H}_A;\hat{H}_B),
\end{align*}
or
\begin{align*}
1 &\geq \frac{|\mat{C}_{\hat{H}_A\hat{H}_E}||\mat{C}_{\hat{H}_B\hat{H}_E}||\mat{C}_{\hat{H}_A\hat{H}_B}|}{(p+\sigma^2_A)(p+\sigma^2_B)(p+\sigma^2_E)|\mat{C}_{\hat{H}_A\hat{H}_B\hat{H}_E}|}\\
0 &\geq \frac{|\mat{C}_{\hat{H}_A\hat{H}_E}||\mat{C}_{\hat{H}_B\hat{H}_E}||\mat{C}_{\hat{H}_A\hat{H}_B}|}{(p+\sigma^2_A)(p+\sigma^2_B)(p+\sigma^2_E)}-|\mat{C}_{\hat{H}_A\hat{H}_B\hat{H}_E}|.
\end{align*}
After computing the expression of each determinant and several simplifications, we obtain
\begin{align*}
\frac{|\mat{C}_{\hat{H}_A\hat{H}_E}||\mat{C}_{\hat{H}_B\hat{H}_E}||\mat{C}_{\hat{H}_A\hat{H}_B}|}{(p+\sigma^2_A)(p+\sigma^2_B)(p+\sigma^2_E)}-|\mat{C}_{\hat{H}_A\hat{H}_B\hat{H}_E}|
=&-|\rho|^2 2p^3+\frac{|\rho p|^4}{p+\sigma^2_E}+|\rho|^2p^4 \left(\frac{1}{p+\sigma^2_A}+\frac{1}{p+\sigma^2_B}\right)\\
&-\frac{|\rho|^4 p^6}{(p+\sigma^2_A)(p+\sigma^2_B)(p+\sigma^2_E)}.
\end{align*}
We still need to prove that this quantity is smaller or equal to zero. We can first simplify the inequality by dividing by $|\rho|^2p^3$. We then need to show that
\begin{align*}
0\geq& -2+ \frac{1}{1+\sigma^2_A/p}+\frac{1}{1+\sigma^2_B/p}+|\rho|^2\frac{1}{1+\sigma^2_E/p}\left(1-\frac{1}{(1+\sigma^2_A/p)(1+\sigma^2_B/p)}\right).
\end{align*}
It is easy to see that the term on the right is maximized for $\sigma^2_E=0$ and $|\rho|=1$ ($|\rho|\leq 1$ by definition). It is then sufficient to focus on that critical case and in particular to show that
\begin{align*}
1&\geq \frac{1}{1+\sigma^2_A/p}+\frac{1}{1+\sigma^2_B/p}-\frac{1}{(1+\sigma^2_A/p)(1+\sigma^2_B/p)}=\frac{1+\sigma^2_A/p+\sigma^2_B/p}{1+\sigma^2_A/p+\sigma^2_B/p+\sigma^2_A\sigma^2_B/p^2},
\end{align*}
which is always smaller or equal to one given that $\sigma^2_A$, $\sigma^2_B$ and $\sigma^2_E$ and $p$ are positive by definition.

\subsection{PDF and Mutual Information of Alice and Eve's Envelopes}
\label{subsection:proof_RA_RE}

In this section, we address the proofs of the results obtained in Propositions~\ref{theorem:liu_Eve} and \ref{proposition:No_loss_info_envelope_Eve}, Lemmas~\ref{lemma:joint_dist_RA_RE} and \ref{lemma:joint_dist_RA_RE_high_SNR} and Theorem~\ref{theorem:I_RA_RE_high_SNR_correlation}. We conduct the proof considering Alice case. The proof can be straightforwardly extended to Bob's case by replacing subscript $A$ by $B$ in all of the following expressions. A starting point is to write the PDF of the channel observations at Alice and Eve. We know that $\hat{H}_A$ and $\hat{H}_E$ follow a ZMCSG with covariance matrix $\mat{C}_{\hat{H}_A\hat{H}_E}$, which gives
\begin{align*}
f_{\hat{H}_A,\hat{H}_E}(\hat{h}_A,\hat{h}_E)&=\frac{1}{\pi^2|\mat{C}_{\hat{H}_A\hat{H}_E}|}e^{-\frac{|\hat{h}_A|^2(p+\sigma_E^2)+|\hat{h}_E|^2(p+\sigma_A^2)-2p\Re(\rho^*\hat{h}_A\hat{h}_E^*)}{|\mat{C}_{\hat{H}_A\hat{H}_E}|}}\nonumber.
\end{align*}
We can express this PDF in polar coordinates using the change of variables $\hat{H}_A=\hat{R}_A \exp(\jmath\hat{\Phi}_A)$, $\hat{H}_E=\hat{R}_E \exp(\jmath\hat{\Phi}_E)$. Doing this, we obtain the joint PDF
\begin{align}
f_{\hat{R}_A,\hat{\Phi}_A,\hat{R}_E,\hat{\Phi}_E}(\hat{r}_A,\hat{\phi}_A,\hat{r}_E,\hat{\phi}_E)\label{distr:HA_H_E_polar_coordinates}
&=\frac{\hat{r}_A\hat{r}_E}{\pi^2|\mat{C}_{\hat{H}_A\hat{H}_E}|}e^{-\frac{\hat{r}_A^2(p+\sigma_E^2)+\hat{r}_E^2(p+\sigma_A^2)-2p\hat{r}_A\hat{r}_E|\rho|\cos(\hat{\phi}_A-\hat{\phi}_E-\angle \rho)}{|\mat{C}_{\hat{H}_A\hat{H}_E}|}}.
\end{align}
We now prove each of the results, relying on (\ref{distr:HA_H_E_polar_coordinates}).

\subsubsection{Complements to the proofs of Propositions~\ref{theorem:liu_Eve} and \ref{proposition:No_loss_info_envelope_Eve}} 

This section derives a set of results on the dependence of random variables, required in the proofs of Propositions~\ref{theorem:liu_Eve} and \ref{proposition:No_loss_info_envelope_Eve}.

Firstly, the random vector $(\hat{\Phi}_A,\hat{\Phi}_E)$ is not independent from $(\hat{R}_A,\hat{R}_E)$, if $|\rho|>1$. Indeed, by simple inspection of (\ref{distr:HA_H_E_polar_coordinates}), we can see that
\begin{align*}
	f_{\hat{R}_A,\hat{\Phi}_A,\hat{R}_E,\hat{\Phi}_E}(\hat{r}_A,\hat{\phi}_A,\hat{r}_E,\hat{\phi}_E)\neq f_{\hat{R}_A,\hat{R}_E}(\hat{r}_A,\hat{r}_E)f_{\hat{\Phi}_A,\hat{\Phi}_E}(\hat{\phi}_A,\hat{\phi}_E).
\end{align*}
The same result holds for $(\hat{\Phi}_A,\hat{\Phi}_B)$ and $(\hat{R}_A,\hat{R}_B)$, as a particularization to the case $\rho=1$ and replacing subscripts $E$ by $B$.

Secondly, $\hat{\Phi}_E$ and $(\hat{R}_A,\hat{R}_E)$ are independent. This can be shown by integrating (\ref{distr:HA_H_E_polar_coordinates}) over $\hat{\phi}_A$ giving
\begin{align}
f_{\hat{R}_A,\hat{R}_E,\hat{\Phi}_E}(\hat{r}_A,\hat{r}_E,\hat{\phi}_E) &=\int_0^{2\pi} f_{\hat{R}_A,\hat{\Phi}_A,\hat{R}_E,\hat{\Phi}_E}(\hat{r}_A,\hat{\phi}_A,\hat{r}_E,\hat{\phi}_E) d\hat{\phi}_A\nonumber\\
&=\frac{2\hat{r}_A\hat{r}_E}{\pi|\mat{C}_{\hat{H}_A\hat{H}_E}|}I_0\left({\frac{2p|\rho|\hat{r}_A\hat{r}_E}{|\mat{C}_{\hat{H}_A\hat{H}_E}|}}\right)e^{-\frac{\hat{r}_A^2(p+\sigma_E^2)+\hat{r}_E^2(p+\sigma_A^2)}{|\mat{C}_{\hat{H}_A\hat{H}_E}|}}\label{eq:f_RA_RE_PhiE},
\end{align}
where $I_0(.)$ is the zero order modified Bessel function of the first kind. This shows that
\begin{align*}
	f_{\hat{R}_A,\hat{R}_E,\hat{\Phi}_E}(\hat{r}_A,\hat{r}_E,\hat{\phi}_E)=f_{\hat{R}_A,\hat{R}_E}(\hat{r}_A,\hat{r}_E)f_{\hat{\Phi}_E}(\hat{\phi}_E).
\end{align*}
The same result holds for $\hat{\Phi}_B$ and $(\hat{R}_A,\hat{R}_B)$, as a particularization to the case $\rho=1$ and replacing subscripts $E$ by $B$.

Thirdly, the envelope and the phase of a ZMCSG are independent. Take for instance the PDF of $\hat{H}_E$, which can be written in polar coordinates, using a change of variable $\hat{H}_E=\hat{R}_E\exp(\jmath\hat{\Phi}_E)$, as
\begin{align*}
f_{\hat{R}_E,\hat{\Phi}_E}(\hat{r}_E,\hat{\phi}_E)&=\frac{\hat{r}_E}{\pi(p+\sigma_E^2)}e^{-\frac{\hat{r}_E^2}{p+\sigma_E^2}},
\end{align*}
which shows that $f_{\hat{R}_E,\hat{\Phi}_E}(\hat{r}_E,\hat{\phi}_E)=f_{\hat{R}_E}(\hat{r}_E)f_{\hat{\Phi}_E}(\hat{\phi}_E)$, implying independence. The same result holds for $\hat{H}_A$ and $\hat{H}_B$.

\subsubsection{Proof of Lemma~\ref{lemma:joint_dist_RA_RE}} \label{proof:lemma:joint_dist_RA_RE}
The joint PDF $f_{\hat{R}_A,\hat{R}_E}(\hat{r}_A,\hat{r}_E)$ can be obtained by integrating (\ref{eq:f_RA_RE_PhiE}) over $\hat{\phi}_E$, which gives
\begin{align}
f_{\hat{R}_A,\hat{R}_E}(\hat{r}_A,\hat{r}_E) &=\int_0^{2\pi} f_{\hat{R}_A,\hat{R}_E,\hat{\Phi}_E}(\hat{r}_A,\hat{r}_E,\hat{\phi}_E) d\hat{\phi}_E\nonumber\\
&=\frac{4\hat{r}_A\hat{r}_E}{|\mat{C}_{\hat{H}_A\hat{H}_E}|}I_0\left({\frac{2p|\rho|\hat{r}_A\hat{r}_E}{|\mat{C}_{\hat{H}_A\hat{H}_E}|}}\right)e^{-\frac{\hat{r}_A^2(p+\sigma_E^2)+\hat{r}_E^2(p+\sigma_A^2)}{|\mat{C}_{\hat{H}_A\hat{H}_E}|}}\label{eq:f_RA_RE},
\end{align}
and leads to the result of Lemma~\ref{lemma:joint_dist_RA_RE}, noting that $|\mat{C}_{\hat{H}_A\hat{H}_E}|=p^2(1-|\rho|^2)+p(\sigma_A^2+\sigma_E^2)+\sigma_A^2\sigma_E^2$.

\subsubsection{Proof of Lemma~\ref{lemma:joint_dist_RA_RE_high_SNR}} \label{proof:joint_dist_RA_RE_high_SNR}

From Bessel function theory \cite[Eq.~10.40.1]{NIST:DLMF}, we know that, as $r\rightarrow +\infty$,
\begin{align}
	I_0(r)&=\frac{e^r}{\sqrt{2\pi r}}+\epsilon_0,\ |\epsilon_0|=O\left(\frac{e^r}{r^{3/2}}\right). \label{eq:Bessel_error_order}
\end{align}
In our case, we have
\begin{align}
	r={\frac{2p|\rho|\hat{r}_A\hat{r}_E}{|\mat{C}_{\hat{H}_A\hat{H}_E}|}}=\frac{2p|\rho|\hat{r}_A\hat{r}_E}{(1-|\rho|^2)p^2+p(\sigma_E^2+\sigma_A^2)+\sigma_E^2\sigma_A^2} . \label{eq:def_r}
\end{align}
The Bessel asymptotic expansion is thus accurate when $r$ becomes large. This is precisely the case as $\sigma_A^2\rightarrow 0$, $\sigma_E^2\rightarrow 0$ and $|\rho|\rightarrow 1$, for $\hat{r}_A>0$ and $\hat{r}_E>0$. Using the Bessel asymptotic expansion of $I_0(.)$ in (\ref{eq:f_RA_RE}), we get
\begin{align}
f_{\hat{R}_A,\hat{R}_E}(\hat{r}_A,\hat{r}_E)=&
\frac{2}{p}\sqrt{\frac{\hat{r}_A\hat{r}_E}{ |\rho|}}e^{-\frac{ \hat{r}_A^2\sigma_E^2+\hat{r}_E^2(\sigma_A^2+p(1-|\rho|^2))}{|\mat{C}_{\hat{H}_A\hat{H}_E}|}}\frac{1}{\sqrt{\pi|\mat{C}_{\hat{H}_A\hat{H}_E}|/p}}e^{-\frac{(\hat{r}_A-|\rho|\hat{r}_E)^2}{|\mat{C}_{\hat{H}_A\hat{H}_E}|/p}} +\epsilon_1\label{eq:first_approx},
\end{align}
where $\epsilon_1$ is the approximation error
\begin{align*}
	\epsilon_1&=\frac{4 \hat{r}_A\hat{r}_E}{|\mat{C}_{\hat{H}_A\hat{H}_E}|} \exp \left(-\frac{\hat{r}_A^2(p+\sigma_E^2)+\hat{r}_E^2(p+\sigma_A^2)}{|\mat{C}_{\hat{H}_A\hat{H}_E}|}\right)\epsilon_0.
\end{align*}
Note that, in the particular cases $\hat{r}_A=0$ or $\hat{r}_E=0$, $\epsilon_1=0$ since $(\ref{eq:first_approx})=(\ref{eq:f_RA_RE})=0$. Using $(\ref{eq:Bessel_error_order})$ and the definition of $r$ in (\ref{eq:def_r}), we can bound the error $\epsilon_1$ as follows
\begin{align*}
	|\epsilon_1|&=O\left( \frac{\left({|\mat{C}_{\hat{H}_A\hat{H}_E}|}\right)^{1/2}e^{ -\frac{\hat{r}_A^2\left(p+\sigma_E^2\right)+\hat{r}_E^2\left(p+\sigma_A^2\right)-2p|\rho|\hat{r}_A\hat{r}_E}{|\mat{C}_{\hat{H}_A\hat{H}_E}|} }}{\left(p|\rho|\right)^{3/2}\left({\hat{r}_A\hat{r}_E}\right)^{1/2}} \right)\\
	&=O\left( \sqrt{{1-|\rho|^2+\sigma_A^2+\sigma_E^2}} \right),
\end{align*}
where we used the fact that the exponential can be bounded in the asymptotic regime by an independent constant.
The second term exponential term of $(\ref{eq:first_approx})$ suggests the following approximation $\hat{r}_A\approx|\rho|\hat{r}_E$. We thus obtain
\begin{align}
f_{\hat{R}_A,\hat{R}_E}(\hat{r}_A,\hat{r}_E)=&\frac{2\hat{r}_E{e^{  -\hat{r}_E^2\frac{p(1-|\rho|^2)+|\rho|^2\sigma_E^2+\sigma_A^2}{|\mat{C}_{\hat{H}_A\hat{H}_E}|} }}}{p}\frac{e^{-\frac{(\hat{r}_A-|\rho|\hat{r}_E)^2}{|\mat{C}_{\hat{H}_A\hat{H}_E}|/p}  }}{\sqrt{\pi |\mat{C}_{\hat{H}_A\hat{H}_E}|/p}}+\epsilon_1+\epsilon_2,\label{eq:second_approx}
\end{align}
where $\epsilon_2$ is the approximation error related to this second approximation
\begin{align*}
	\epsilon_2=&\frac{2}{p}\frac{e^{-\frac{(\hat{r}_A-|\rho|\hat{r}_E)^2}{|\mat{C}_{\hat{H}_A\hat{H}_E}|/p}  }}{\sqrt{\pi |\mat{C}_{\hat{H}_A\hat{H}_E}|/p}}\left(\sqrt{\frac{\hat{r}_A\hat{r}_E}{ |\rho|}}e^{-\frac{ \hat{r}_A^2\sigma_E^2+\hat{r}_E^2(\sigma_A^2+p(1-|\rho|^2))}{|\mat{C}_{\hat{H}_A\hat{H}_E}|}} -\hat{r}_Ee^{  -\hat{r}_E^2\frac{p(1-|\rho|^2)+|\rho|^2\sigma_E^2+\sigma_A^2}{|\mat{C}_{\hat{H}_A\hat{H}_E}|} }\right).
\end{align*}
When $\hat{r}_A=|\rho|\hat{r}_E$, the term in parenthesis is exactly zero and so $\epsilon_2=0$. In other cases, it can be bounded by an independent constant as $\sigma_A^2\rightarrow 0$, $\sigma_E^2\rightarrow 0$ and $|\rho|\rightarrow 1$, giving 
\begin{align*}
	|\epsilon_2|=&
	O\left(\frac{e^{-\frac{\beta}{(1-|\rho|^2)+\sigma_A^2+\sigma_E^2}  }}{\sqrt{1-|\rho|^2+\sigma_A^2+\sigma_E^2}}\right),
\end{align*}
where $\beta$ is some real strictly positive constant. Moreover, we can still simplify (\ref{eq:second_approx}) by performing the two following approximations $|\mat{C}_{\hat{H}_A\hat{H}_E}|/p\approx  p(1-|\rho|^2)+\sigma_A^2+\sigma_E^2$ and $\frac{p(1-|\rho|^2)+|\rho|^2\sigma_E^2+\sigma_A^2}{|\mat{C}_{\hat{H}_A\hat{H}_E}|}\approx 1/p$ so that we get
\begin{align*}
f_{\hat{R}_A,\hat{R}_E}(\hat{r}_A,\hat{r}_E)=&\frac{2\hat{r}_Ee^{  -\frac{\hat{r}_E^2}{p} }}{p}\frac{e^{-\frac{(\hat{r}_A-|\rho|\hat{r}_E)^2}{ p(1-|\rho|^2)+\sigma_A^2+\sigma_E^2}  }}{\sqrt{\pi  (p(1-|\rho|^2)+\sigma_A^2+\sigma_E^2)}}+ \epsilon_1+\epsilon_2+\epsilon_3+\epsilon_4,
\end{align*}
which gives the asymptotic distribution of Lemma~\ref{lemma:joint_dist_RA_RE_high_SNR} and 
where $\epsilon_3$ and $\epsilon_4$ are the approximation errors related to the approximations
\begin{align*}
	\epsilon_3&=\frac{2\hat{r}_E}{p\sqrt{\pi |\mat{C}_{\hat{H}_A\hat{H}_E}|/p}}\left( {e^{  -\hat{r}_E^2\frac{p(1-|\rho|^2)+|\rho|^2\sigma_E^2+\sigma_A^2}{|\mat{C}_{\hat{H}_A\hat{H}_E}|} -\frac{p(\hat{r}_A-|\rho|\hat{r}_E)^2}{|\mat{C}_{\hat{H}_A\hat{H}_E}|}  }}  -e^{  -\frac{\hat{r}_E^2}{p} }{e^{-\frac{(\hat{r}_A-|\rho|\hat{r}_E)^2}{ p(1-|\rho|^2)+\sigma_A^2+\sigma_E^2}  }}\right)\\
	\epsilon_4	&=\frac{2\hat{r}_Ee^{  -\frac{\hat{r}_E^2}{p} }{e^{-\frac{(\hat{r}_A-|\rho|\hat{r}_E)^2}{ p(1-|\rho|^2)+\sigma_A^2+\sigma_E^2}  }}}{p\sqrt{\pi}}\left(\frac{1}{\sqrt{|\mat{C}_{\hat{H}_A\hat{H}_E}|/p}} - \frac{1}{\sqrt{p(1-|\rho|^2)+\sigma_A^2+\sigma_E^2}} \right).
\end{align*}	
To bound $\epsilon_3$ and $\epsilon_4$, we can use a first order Taylor expansion of the exponential and the inverse of a square root respectively. We find
\begin{align*}
	|\epsilon_3|
	&=O\left(\frac{{(1-|\rho|^2)\sigma_E^2+\sigma_A^2\sigma_E^2}}{  (1-|\rho|^2+\sigma_A^2+\sigma_E^2)^{3/2}}\right)\\
	|\epsilon_4|&=O\left(\frac{\sigma_A^2+\sigma_E^2}{\sqrt{1-|\rho|^2+\sigma_A^2+\sigma_E^2}}\right). 
\end{align*}
Finally, combining the bounds on the approximation errors $\epsilon_1,\epsilon_2,\epsilon_3$ and $\epsilon_4$, we find that the total approximation error can be bounded as
\begin{align*}
	|\epsilon_1+\epsilon_2+\epsilon_3+\epsilon_4|
	&=O\left(\sqrt{1-|\rho|^2+\sigma_A^2}\right),
\end{align*}
where we used $\mathbf{(As2)}$. This completes the proof.


\subsubsection{Proof of Theorem~\ref{theorem:I_RA_RE_high_SNR_correlation}} \label{proof:I_RA_RE_high_SNR_correlation}

Let us define the asymptotic PDF of $f_{\hat{R}_A,\hat{R}_E}(\hat{r}_A,\hat{r}_E)$ as
\begin{align*}
f^{\mathrm{High}}_{\hat{R}_A,\hat{R}_E}(\hat{r}_A,\hat{r}_E)=&\frac{2\hat{r}_Ee^{  -\frac{\hat{r}_E^2}{p} }}{p}\frac{e^{-\frac{(\hat{r}_A-|\rho|\hat{r}_E)^2}{ p(1-|\rho|^2)+\sigma_A^2+\sigma_E^2}  }}{\sqrt{\pi  (p(1-|\rho|^2)+\sigma_A^2+\sigma_E^2)}}.
\end{align*}
We can see that the PDF factorizes as $f^{\mathrm{High}}_{\hat{R}_A,\hat{R}_E}(\hat{r}_A,\hat{r}_E)={f}_{1}(\hat{r}_E){f}_{2}(\hat{r}_A|\hat{r}_E)$. We can identify ${f}_{1}(\hat{r}_E)$ to be a Rayleigh distribution with parameter $\frac{p}{2}$, while the conditional PDF ${f}_{2}(\hat{r}_A|\hat{r}_E)$ is a normal centered in $|\rho|\hat{r}_E$ and of variance $({p(1-|\rho|^2)+\sigma_A^2+\sigma_E^2})/{2}$. 

Results such as \cite[Th.~1]{Godavarti2004} can be used to prove that, for a sequence of PDFs such that $f^{\mathrm{High}}_{\hat{R}_A,\hat{R}_E}(\hat{r}_A,\hat{r}_E)\rightarrow f_{\hat{R}_A,\hat{R}_E}(\hat{r}_A,\hat{r}_E)$ pointwise, their differential entropy also converges provided that: i) their second order moments are bounded from above and ii) their PDF is bounded from above. These two conditions are satisfied in our case as long as $p$, $\sigma_A^2$ and $\sigma_E^2$ are bounded from above, which makes practical sense. In the pathological case $\sigma_A^2=0$, $\sigma_E^2=0$ or $|\rho|=1$, $|\mat{C}_{\hat{H}_A\hat{H}_E}|=0$ and the PDFs are unbounded, which makes practical sense since $h(\hat{R}_A,\hat{R}_E)\rightarrow-\infty$. Unfortunately, finding the analytical rate of convergence of the differential entropy is intricate.

All of the following expressions should be understood in the asymptotic sense as $\sigma_A^2\rightarrow 0$ and $\sigma_E^2\rightarrow 0$ and $|\rho|\rightarrow 1$. Using the chain rule for the differential entropy $h(X,Y)=h(X)+h(Y|X)$, the general expression of the differential entropies of Rayleigh and normal distributions, the joint differential entropy of the distribution $f^{\mathrm{High}}_{\hat{R}_A,\hat{R}_E}(\hat{r}_A,\hat{r}_E)$ can be easily computed and we find
\begin{align*}
h(\hat{R}_A,\hat{R}_E)\rightarrow&\frac{1}{2}\log_2 \left(p^2(1-|\rho|^2)+p(\sigma_A^2+\sigma_E^2)\right)  	+ \frac{1}{2}\log_2 \left(\frac{\pi e^{3+\gamma}}{4}\right) .
\end{align*}
Inserting this expression in (\ref{eq:def_I_RA_RE}), together with the expressions of $h(\hat{R}_A)$ and $h(\hat{R}_E)$ given in (\ref{eq:h_RA}) and (\ref{eq:h_RE}) respectively, we finally obtain
\begin{align*}
I(\hat{R}_A,\hat{R}_E)\rightarrow
&\frac{1}{2}\log_2 \left(\frac{(p+\sigma_A^2)(p+\sigma_E^2)}{p^2(1-|\rho|^2)+p(\sigma_A^2+\sigma_E^2)}\right) + \chi\\
\rightarrow&\frac{1}{2}\log_2 \left(\frac{p}{p(1-|\rho|^2)+\sigma_A^2+\sigma_E^2}\right) + \chi,
\end{align*}
with the definition of $\chi$ introduced in Theorem~\ref{theorem:I_RA_RB_high_SNR}, which concludes the proof.

\subsection{PDF of Alice, Bob and Eve's Envelopes}
\label{subsection:proof_RA_RB_RE}

We know that $\hat{H}_A$, $\hat{H}_B$ and $\hat{H}_E$ follow a ZMCSG with covariance matrix $\mat{C}_{\hat{H}_A\hat{H}_B\hat{H}_E}$, which gives
\begin{align*}
f_{\hat{H}_A,\hat{H}_B,\hat{H}_E}(\hat{h}_A,\hat{h}_B,\hat{h}_E)=&\frac{1}{\pi^3|\mat{C}_{\hat{H}_A\hat{H}_B\hat{H}_E}|}e^{ \frac{ 2p(p(1-|\rho|^2)+\sigma_E^2)\hat{h}_A\hat{h}_B^*+2 p\sigma_B^2\Re(\hat{h}_A\rho^*\hat{h}_E^*)+2 p\sigma_A^2\Re(\hat{h}_B\rho^*\hat{h}_E^*)}{|\mat{C}_{\hat{H}_A\hat{H}_B\hat{H}_E}|} }\nonumber\\
&e^{-\frac{|\hat{h}_A|^2|\mat{C}_{\hat{H}_B\hat{H}_E}|+|\hat{h}_B|^2|\mat{C}_{\hat{H}_A\hat{H}_E}|+|\hat{h}_E|^2|\mat{C}_{\hat{H}_A\hat{H}_B}|}{|\mat{C}_{\hat{H}_A\hat{H}_B\hat{H}_E}|}}.
\end{align*}
This PDF can be expressed in polar coordinates as
\begin{align}
&f_{\hat{R}_A,\hat{R}_B,\hat{R}_E,\hat{\Phi}_A,\hat{\Phi}_B,\hat{\Phi}_E}(\hat{r}_A,\hat{r}_B,\hat{r}_E,\hat{\phi}_A,\hat{\phi}_B,\hat{\phi}_E)=\frac{\hat{r}_A\hat{r}_B\hat{r}_E}{\pi^3|\mat{C}_{\hat{H}_A\hat{H}_B\hat{H}_E}|}e^{{-\frac{\hat{r}_A^2|\mat{C}_{\hat{H}_B\hat{H}_E}|+\hat{r}_B^2|\mat{C}_{\hat{H}_A\hat{H}_E}|+\hat{r}_E^2|\mat{C}_{\hat{H}_A\hat{H}_B}|}{|\mat{C}_{\hat{H}_A\hat{H}_B\hat{H}_E}|}}}\label{eq:refeq}\\
& e^{{ \frac{ 2p(p(1-|\rho|^2)+\sigma_E^2)\hat{r}_A\hat{r}_B\cos(\hat{\phi}_A-\hat{\phi}_B)+2 p\sigma_B^2\hat{r}_A\hat{r}_E|\rho|\cos(\hat{\phi}_A-\hat{\phi}_E-\angle \rho)+2 p\sigma_A^2\hat{r}_B\hat{r}_E|\rho|\cos(\hat{\phi}_B-\hat{\phi}_E-\angle \rho)}{|\mat{C}_{\hat{H}_A\hat{H}_B\hat{H}_E}|} }}.\nonumber
\end{align}
The joint PDF $f_{\hat{R}_A,\hat{R}_B,\hat{R}_E}(\hat{r}_A,\hat{r}_B,\hat{r}_E)$ can be obtained by integrating (\ref{eq:refeq}) over the phases $\hat{\phi}_A$, $\hat{\phi}_B$ and $\hat{\phi}_E$, which leads to the result of Lemma~\ref{lemma:joint_dist_RA_RB_RE}. Indeed the first two terms do not depend on the phases, so that they can be put out of the integrals. The third term however does. One can easily see that the phase of $\rho$ does not impact the result, so that it can be removed. One can further notice that the cosines do not depend on the absolute phases $\hat{\phi}_A,\hat{\phi}_B,\hat{\phi}_E$ but on their differences. Making a change of variable $\phi_1=\hat{\phi}_A-\hat{\phi}_B$, $\phi_2=\hat{\phi}_A-\hat{\phi}_E$, we see that the last difference is $\hat{\phi}_B-\hat{\phi}_E=\phi_2-\phi_1$. Hence, one integral simplifies.

%
%

\ifCLASSOPTIONcaptionsoff
  \newpage
\fi



\bibliographystyle{IEEEtran}

\bibliography{IEEEabrv,refs}
\end{document}